\newtheorem{theorem}{\bfseries Theorem}
\newtheorem{lemma}{\bfseries Lemma}
\newtheorem{definition}{\bfseries Definition}
\newtheorem{prop}{\bfseries Proposition}
\newcommand{\comment}[1]{}
\newcommand{\scriptt}{\mathcal{T}}
\newcommand{\scripti}{\mathcal{I}}
\newcommand{\scripto}{\mathcal{O}}
\newcommand{\scriptv}{\mathcal{V}}
\newcommand{\scripte}{\mathcal{E}}
\newcommand{\scriptm}{\mathcal{M}}
\newcommand{\scriptw}{\mathcal{W}}
\newcommand{\newy}{\tilde{y}}
\newcommand{\newz}{\tilde{z}}
\title{\LARGE{Distributed Algorithms for Consensus and Coordination in the Presence of Packet-Dropping Communication Links\\
\Large{Part II: Coefficients of Ergodicity Analysis Approach }}}
 \author{\authorblockN{Nitin H. Vaidya,~\IEEEmembership{Fellow,~IEEE}} \\
\authorblockN{Christoforos N. Hadjicostis,~\IEEEmembership{Senior Member,~IEEE}}\\
\authorblockN{Alejandro~D.~Dom\'{i}nguez-Garc\'{i}a,~\IEEEmembership{Member,~IEEE}}\\
\authorblockN{~} \\
\authorblockN{September 28, 2011}
\thanks{University of Illinois at Urbana-Champaign.  Coordinated Sciences Laboratory technical report UILU-ENG-11-2208 (CRHC-11-06)}
\thanks{N. H. Vaidya and A. D. Dom\'{i}nguez-Garc\'{i}a  are with the Department of Electrical and Computer Engineering at the University of Illinois at Urbana-Champaign, Urbana, IL 61801, USA. E-mail:  \{nhv, aledan\}@ILLINOIS.EDU.}
\thanks{C. N. Hadjicostis is with the Department of Electrical and Computer Engineering at the University of Cyprus, Nicosia, Cyprus, and also with the Department of Electrical and Computer Engineering at the University of Illinois at Urbana-Champaign, Urbana, IL 61801, USA. E-mail:  chadjic@UCY.AC.CY.}
\thanks{The work of A. D. Dom\'{i}nguez-Garc\'{i}a was supported in part by  NSF under Career Award ECCS-CAR-0954420. The work of C. N. Hadjicostis was supported in part by the  
European Commission (EC) 7th Framework Programme (FP7/2007-2013) 
under grant agreements INFSO-ICT-223844 and PIRG02-GA-2007-224877. The work of N. H. Vaidya was supported in part by Army Research Office grant W-911-NF-0710287 and NSF Award 1059540. Any opinions, findings, and conclusions or recommendations expressed here are those of the authors and do not necessarily reflect the views of the funding agencies or the U.S. government.}

}
\begin{document}

\markboth{Coordinated Sciences Laboratory technical report UILU-ENG-11-2208 (CRHC-11-06)}{}

\maketitle

\begin{abstract}
In this two-part paper, we consider multicomponent systems in which each component  can iteratively exchange information with other components in its neighborhood in order to compute, in a distributed fashion,  the average of the components' initial values or some other quantity of interest (i.e., some function of these initial values).     In particular, we study an  iterative algorithm for computing the average of the initial values of the nodes. In this algorithm,  each component  maintains two sets of variables that are updated via two identical linear iterations. The average of the initial values of the nodes can be asymptotically computed by each node as the ratio of two of the variables it maintains. In the first part of this paper, we show how the update rules for the two sets of variables can be enhanced so that  the algorithm becomes tolerant to communication links that may drop packets, independently among them and independently between different transmission times. In this second part,  by rewriting  the collective dynamics of both iterations, we show  that the resulting system is mathematically equivalent to a finite inhomogenous Markov chain whose transition matrix takes one of finitely many values at each step. Then, by using e a coefficients of ergodicity approach, a method commonly used for convergence analysis of Markov chains,  we prove convergence of the robustified consensus scheme. The analysis suggests that similar convergence should hold under more general conditions as well.
\end{abstract}
\onehalfspace
~\\
~\\
~\\
~\\
\begin{small}\textbf{Note to readers:} Section~\ref{s_intro} discusses the relation between Part II (this report) and the companion Part I of the report, and discusses some related work. The readers may skip Section~\ref{s_intro} without a loss of continuity.\end{small}

\newpage

\section{Introduction}
\label{s_intro}

The focus of  this paper is to analyze the convergence of the robustified double-iteration\footnote{In this second part we will also refer to this algorithm as ``ratio consensus" algorithm and will   use  both denominations interchangeably.} algorithm for average consensus introduced in Part I, utilizing a different framework that allows us to move away from the probabilistic model describing the availability of communication links of Part I. More specifically, instead of focusing on the dynamics of the first and second moments of the two iterations to establish convergence as done in Part I,  we consider a framework that builds upon the theory of finite inhomogenous Markov chains. In this regard, by augmenting the communication graph, we will show that the collective dynamics of each of the two iterations can be rewritten in such a way that the resulting system is mathematically equivalent to a finite inhomogenous Markov chain whose transition matrix takes values from a finite set of possible matrices. Once the problem is recasted in this fashion, tools, such as coefficients of ergodicity, commonly used in the analysis of inhomogenous Markov chains (see, e.g., \cite{Se:06}) are used to prove the convergence of the algorithm. 

Recalling from Part I, when the communication network is perfectly reliable (i.e., in the absence of packet drops),  the collective dynamics of the linear iterations can be described by a discrete-time transition system with no inputs in which the transition matrix is column stochastic and primitive. Then, each node  runs two identical copies of a linear iteration, with each iteration initialized differently depending on the problem to be solved. This double-iteration algorithm is a particular instance of the algorithm in \cite{Benezit:10} (which is a generalization of the algorithm proposed in  \cite{KeDo:03}), where the matrices describing each linear iteration are allowed to vary as time evolves, whereas in our setup (for the ideal case when there are no communication link failures) the  transition matrix is fixed over time. In general, the algorithm described above is   not robust against packet-dropping communication links. It might be possible to robustify it by introducing message delivery acknowledgment mechanisms and retransmission mechanisms, but this has certain overhead and drawbacks as discussed in Section~\ref{considerations}. Also, in a pure broadcast system, which is the communication model we assume in this work, it is easy to see that the double-iteration algorithm above will not work properly. The mechanism we proposed in Part I to robustify the double iteration algorithm was for each node $i$ to keep track of three quantities of interest: i) its own internal state (as captured by the state variables maintained in the original double iteration scheme of \cite{Benezit:10,DoHa:10}; ii) an auxiliary variable that accounts for  the total mass broadcasted so far by node $i$ to (all of) its neighbors; and  iii) another auxiliary variable that accounts for the total received mass from each node $j$  that sends information to  node $i$. The details of the algorithm are provided in Section~\ref{s_robust}, but the key in analyzing convergence of the algorithm is to  show that the collective system dynamics can be rewritten by introducing additional nodes---virtual buffers---that account for the difference between these two auxiliary variables. The resulting enhanced system is equivalent to  an inhomogenous Markov chain whose transition matrix takes values from a finite set. 

As discussed in Part I, even if relying on the ratio of two linear iterations, our work is different from the work in \cite{Benezit:10} in terms of both the communication model and also the nature of the protocol itself. In this regard, a key premise in \cite{Benezit:10} is that stochasticity of the transition matrix must be maintained over time, which requires sending nodes to know the number of nodes that are listening, suggesting that i) either the communication links are perfectly reliable, or ii) there is some acknowledgment and retransmission mechanism that ensures messages are delivered to the listening nodes at every round of information exchange. In our work, we remove both assumptions, and assume a pure broadcast model without acknowledgements and retransmissions. It is very easy to see that in the presence of lossy communication links, the algorithm in \cite{Benezit:10} does not solve the average consensus problems as stochasticity of the transition matrix is not preserved over time. Thus, as mentioned above, the key in the approach we follow  to analyze convergence is to augment the communication graph by introducing additional nodes, and to establish the correctness of the algorithms and establish that the collective dynamics of the resulting system is equivalent to  a finite inhomogenous Markov chain with  transition matrix that values values from a finite set. Once the system is rewritten in this fashion, the robust algorithm for ratio consensus reduces to  a similar setting to the one in  \cite{Benezit:10}, except for the fact that some of the the resulting transition matrices might not have positive diagonals, which is required for the proof in \cite{Benezit:10}. Thus, in this regard, our approach may  be also viewed as a generalization of the main result    in \cite{Benezit:10}.

The idea of augmenting the communication graph has been used in consensus problems to study the impact of bounded (fixed and random) communication delays  \cite{Cao.Morse.ea2008,Nedic:2010,TsRa:11}. In our work, the augmented communication graph that results from rewriting the collective system dynamics has some similarities to the augmented communication graph in \cite{TsRa:11}, where  the   link from node $i$ to node $j$  is replaced by several paths  from  node  $i$ to node  $j$, in order to mimic the effect of communication delays. In particular, in \cite{TsRa:11}, for a maximum delay of $B$ steps, $B$ paths are added in parallel with the single-edge path that captures the non-delayed message transmission. The added path corresponding to delay $b$ ($1\leq b\leq B$) has $b$ nodes, for a total of $B(B+1)/2$ additional nodes capturing the effect of message  transmission delays from node  $i$ to node  $j$. At every time step, a message from  node $i$ to node $j$ is randomly routed through one of these paths; the authors assume  for simplicity that each of the paths is activated with probability $1/B$. For large communication graphs, one of the drawbacks of this model is the explosion in the number of nodes to be added to the communication graph to model the effect of delays. In our work, for analysis purposes, we also use the idea of augmenting the communication graph, but in our case, a single parallel path is sufficient to capture the effect of packet-dropping communication links. As briefly discussed later, it is easy to see that our modeling formalism can also be used to capture random delays, with the advantage over the formalism in  \cite{TsRa:11} that in our model, it is only necessary to add a single parallel path with $B$ nodes (instead of the $B(B+1)/2$ nodes added above) per link in the original communication path, which reduces the number of states added. Additionally, our modeling framework can handle any delay distribution, as long as   the equivalent augmented
network satisfies properties (M1)-(M5) discussed in Section~\ref{ss_matrix}.

In order to make Part II self-contained, we  review several ideas already introduced in Part I, including the double-iteration algorithm formulation over perfectly reliable networks and its robustified version. In Part II, we will embrace the common convention utilized in Markov chains of pre-multiplying the transition matrix of the Markov chain by the corresponding probability vector. 

The remainder of this paper is organized as follows. Section~\ref{s_model} introduces the communication model,  briefly describes the non-robust version of the double-iteration algorithm, and discusses some  issues  that arise  when implementing the double-iteration algorithm in networks with unreliable links. Section~\ref{s_robust} describes the strategy to robustify the double-iteration algorithm against communication link failures. Section~\ref{s_robust_algo} 
reformulates each of the two iterations in the robust algorithm as an inhomogeneous Markov chain. We employ coefficients of ergodicity analysis to characterize the algorithm behavior in Section~\ref{s_ergodicity}.  Convergence  of the robustified double-iteration algorithm is established in Section~\ref{convergence_analysis}. Concluding remarks and discussions on future work are presented in Section~\ref{concluding_remarks}.

\section{Preliminaries}
\label{s_model}
This section describes the communication model we adopt throughout the work, introduces notation,  reviews the double-iteration algorithm that can be used to solve consensus problems when the communication network is perfectly reliable, and discusses issues  that arise  when implementing the double-iteration algorithm in networks with packet-dropping links.

\subsection{Network Communication Model}
The system under consideration consists of a network of $m$ nodes, $\scriptv=\{1,2,\dots, m\}$, each of which has some initial value $v_i,~i=1,2,\dots,m$,
(e.g., a temperature reading). The nodes need to
reach consensus to the average of these initial values
in an iterative fashion. In other words,  the goal is for each node to obtain the value $\frac{\sum_{j=1}^m v_j}{m}$ in a distributed fashion. 
We assume a synchronous\footnote{We later discuss how the techniques we develop for reaching consensus using the double iteration algorithm in the presence of packet-dropping links naturally lead to an asynchronous computation setup.} system in which time is divided
into {\em time steps} of fixed duration.
The nodes in the network are connected by
a certain directed network. More specifically, a directed link $(j,i)$ is said to ``exist'' if 
transmissions from node $j$ can be received by node $i$ infinitely often
over an infinite interval.
Let $\scripte$ denote the set of all directed links that exist in the network.
For notational convenience, we take that $(i,i)\in\scripte,~\forall i$, so that a self-loop exists
at each node.
Then, graph $\mathcal{G}=(\scriptv,\scripte)$ represents the network connectivity.
Let us define $\scripti_i=\{ j ~|~ (j,i) \in \scripte\}$
and $\scripto_i=\{j~|~(i,j)\in\scripte\}$. Thus,
$\scripti_i$ consists of all nodes from whom node $i$ has
incoming links, and $\scripto_i$ consists of all nodes to whom node
$i$ has outgoing links. 
For a set $S$, we will denote the cardinality of set $S$ by $|S|$.
The outdegree of node $i$, denoted as $D_i$, 
is the size of set $\scripto_i$, thus,
$D_i=|\scripto_i|$. 
Due to the assumption that all nodes have self-loops,
$i\in \scripti_i$ and $i\in \scripto_i$, $\forall i\in\scriptv$.
We assume that graph $\mathcal{G}=(\scriptv,\scripte)$ is strongly
connected. Thus, in $\mathcal{G}=(\scriptv,\scripte)$, there exists a directed path from any
node $i$ to any node $j$, $\forall i,j\in\scriptv$ (although it is possible
that the links on such a path between a pair of nodes may not
all be simultaneously reliable in a given time slot).

The iterative consensus algorithms considered
in this paper assume that, at each step of the iteration, each node transmits
some information to all the nodes to whom it has a reliable directed link
during that iteration (or ``time step''). 
The iterative consensus algorithm summarized in Section~\ref{s_ratio} assumes
the special case wherein all the links
are {\em always reliable} (that is, all links are reliable in every time step).
In Section~\ref{s_robust}, and beyond, we consider
a network with potentially unreliable links. Our work on iterative consensus over unreliable links is motivated
by the presence of such links in wireless networks.
Suppose that the nodes in our network 
communicate over wireless links, with the node locations being
fixed. In such a wireless network, each node should 
generally be able to communicate with the other nodes in its
vicinity. However, such transmissions may not always be reliable,
due to channel fading and interference from other sources.
To make our subsequent discussion  precise, we will assume that
a link $(i,j)$ exists (i.e., $(i,j)\in\scripte$)
only if each transmission from $i$
is successfully received by node $j$ with probability $q_{ij}$
($0<q_{ij}\leq 1$).
We assume that successes of transmissions on different
links are independent of each other; also,
successes of different transmissions
on any given link are independent of each other. As we will see, these independence assumptions can be partially relaxed but we adopt them at this point for simplicity.

We assume that all transmissions from any node $i$
are {\em broadcasts},\footnote{As
elaborated later, the results in this paper can also be applied
in networks wherein the transmissions are unicast (not broadcast).} in the sense that, every node
$j$, such that $(i,j)\in\scripte$, may receive $i$'s
transmission with probability $q_{ij}$ independently between nodes and transmission steps.
As seen later, this broadcast property can potentially
be exploited to make communication more efficient, particularly when a
given node $i$ wants to send identical information to all the nodes
in $\scripto_i$. 
When node $i$ broadcasts a message to its neighbors,
the reliabilities of receptions at different nodes in $\scripto_i$
are mutually independent. Each node $i$ is assumed to be aware of the value of $D_i$
(i.e., the number of nodes in $\scripto_i$), and 
the identity of each node in set $\scripti_i$.
This information can be learned using {\em neighbor discovery}\,
mechanisms used in wireless ad hoc or mesh
networks. Note that node $i$ does not necessarily know whether transmissions to nodes in $\mathcal{O}_i$ are successful.

\subsection{Ratio Consensus Algorithm in Perfectly Reliable Communication Networks}
\label{s_ratio}

In this section, we summarize a consensus algorithm for a special
case of the above system, wherein all the links in the network
are {\em always reliable} (that is, reliable in every time step).
The ``ratio consensus'' algorithm presented here performs 
two iterative computations in parallel, with the solution of the
consensus algorithm being asymptotically obtained as the {\em ratio}\, of the
outcome of the two parallel iterations.
We will refer to this approach as {\em ratio consensus}.
In prior literature, similar approaches have also been called
{\em weighted consensus} \cite{Benezit:10,KeDo:03}.

Each node $i$ maintains at iteration $k$ state variables $y_k[i]$
and $z_k[i]$. At each time step $k$, each node $i$ updates
its state variable as follows:
\begin{eqnarray}
y_k[i] & = & \sum_{j\in \scripti_i} ~ y_{k-1}[j] \, / \, D_j ~, ~~~~~~ k\geq 1,\\
z_k[i] & = & \sum_{j\in \scripti_i} ~ z_{k-1}[j] \, / \, D_j ~,~~~~~~ k\geq 1,
\end{eqnarray}
where $y_0[j]=v_j,~\forall j=1,\dots,m$, and $z_0[j]=1,~\forall j=1,\dots,m$.

To facilitate implementation of the above iterations,
at time step $k$, each node $i$ broadcasts a message containing values
$y_{k-1}[i]/D_i$ and $z_{k-1}[i]/D_i$ to each node in $\scripto_i$, and awaits reception of
a similar message from each node in $\scripti_i$. When   node $i$
has received, from each node $j\in \scripti_i$, a value (namely,
$y_{k-1}[j]/D_j$ and $z_{k-1}[i]/D_j$) at step $k$, node $i$
performs the above update of its state variables (by simply summing the corresponding values).
Hereafter, we will use the phrase ``message $v$" to mean ``message
containing value $v$".

The above two iterations are represented in
a matrix notation in (\ref{e_y_fixed})
and (\ref{e_z_fixed}),
where $y_k$ and $z_k$ are row vectors of size $m$,
and $M$ is an $m\times m$ primitive matrix\footnote{A finite square matrix $A$ is said to be 
{\em primitive}\, if for some positive integer
$p$, $A^p>0$, that is, $A^p[i,j] > 0,~\forall i,j$.},
such that $M[i,j]=1/D_i$ if $j\in \scripto_i$
and 0 otherwise. Compactly, we show
\begin{eqnarray}
y_{k} & = & y_{k-1} ~ M  ,~~~~~~~~~~ k \geq 1~, \label{e_y_fixed} \\
z_{k} & = & z_{k-1} ~ M , ~~~~~~~~~~ k \geq 1. \label{e_z_fixed}
\end{eqnarray}
It is assumed that $z_0[j] = 1$ and $y_0[j] =v_j$ are the initial values 
at each node $j\in\scriptv$.
Each node $i$ calculates, at each time step $k$, the ratio
\[
v_k[i] = \frac{y_k[i]}{z_k[i]} \; .
\]
For the transition matrix $M$, (a) $M[i,j]\geq 0$,
and (b) for all $i$, $\sum_j M[i,j]=1$. Any matrix that satisfies
these two conditions is said to be a {\em row stochastic}\, matrix.
It has been shown in \cite{DoHa:10} that $v_k[i]$ asymptotically converges
to the average of the elements of $y_0$, provided that $M$ is {\em primitive}
and {\em row stochastic}. That is, if $M$ is a primitive row stochastic
matrix, then
\begin{eqnarray}
\lim_{k\rightarrow \infty} ~ v_k[i] = 
\frac{ \sum_j ~ y_0[j] }{m}, \;~\forall i\in\scriptv,
\label{e_fixed}
\end{eqnarray}
where $m$ is the number of elements in vector $y_0$.

%

\comment{++++++++++
NOTE: IS THE ABOVE SHOWN SPECIFICALLY FOR THE CASE WHEN MATRIX ENTRIES ARE
$1/D_i$, OR FOR ANY STOCHASTIC MATRX?

NOTE: WE HAVE NOT REALLY SHOWN THE ABOVE. WE SHOWED IT FOR $Z_0=ONES$ IN WHICH CASE THE RATIO IS ALWAYS WELL DEFINED BECAUSE IN OUR CASE WE ALWAYS HAVE $Z[I]>0$.

NEED TO MAKE THE TEXT ABOVE CONSISTENT WHATEVER WAS SHOWN IN THE PREVIOUS
PAPER.
+++++++++++++++}

\subsection{Implementation Aspects of Ratio Consensus Algorithm in the Presence of  Unreliable Links} \label{considerations}

Let us consider how we might implement
iterations (\ref{e_y_fixed}) and (\ref{e_z_fixed}) in a wireless network.
Since the treatment for the $y_k$ and $z_k$ iterations is
similar, let us focus on the $y_k$ iteration for now.
Implementing (\ref{e_y_fixed}) requires that,
at iteration $k$ (to compute $y_k$), node $i$ should
transmit message $y_{k-1}[i]\, M[i,j]$ to
each node $j\in \scripto_i$.
Conveniently, for all $j\in \scripto_i$, the values $M[i,j]$ are
identical, and equal to $1/D_i$. Thus, node $i$ needs to send
message $y_{k-1}[i]/ D_i$ to each
node in $\scripto_i$. Let us define
\[ \mu_k[i] \equiv y_{k-1}[i] \, / D_i~, ~~~~~ k\geq 1.
\]

In a wireless network, the two approaches described next may be used by node
$i$ to transmit message $\mu_k[i]$ to all the nodes in $\scripto_i$.

\subsubsection*{Approach 1} In this approach, each node $i$ ensures that its
message $\mu_k[i]$ is delivered reliably to all the nodes in $\scripto_i$.
One way to achieve this goal is as follows. Node $i$ can 
broadcast the message $\mu_k[i]$ on the wireless channel, and then wait
for acknowlegdements (ack) from all the nodes in $\scripto_i$.
If such acks are not
received from all nodes in $\scripto_i$ within some timeout interval,
then $i$ can retransmit the message.
This procedure will be repeated until acks are received from all the intended
recipients of $\mu_k[i]$.
This procedure ensures that the message is received by each node in $\scripto_i$
reliably in each step $k$ of the iteration. However, as an undesirable side-effect,
the   time required to guarantee the reliable delivery to all the neighboring
nodes is not fixed. In fact, this  time can
be arbitrarily large with a non-zero probability, if
each transmission on a link $(i,j)\in\scripte$ is reliable with probability
$q_{ij}<1$. Different nodes may require different amounts of time to
reliably deliver their message to their intended recipients.
Thus, if a fixed finite interval of time is allocated for each step $k$,
then it becomes difficult to guarantee that the iterations will be
always performed {\em correctly} (because some messages 
may not be delivered within the fixed time interval).

\subsubsection*{Approach 2} Alternatively, each node $i$ may just broadcast its message
$\mu_k[i]$ once in time step $k$, and hope that
all the nodes in $\scripto_i$ receive it reliably.
This approach has the advantage that each step of the iteration
can be performed in a short (and predictable) time interval.
However, it also has the undesirable property that all the nodes
in $\scripto_i$ may not receive the message (due to link unreliability),
and such nodes will not be able to update their state correctly.
It is important to note that, since there are no acknowlegements
being sent,
a node $i$ cannot immediately know whether a node $j\in\scripto_i$ has
received $i$'s message or not.

Considering the shortcomings of the above two approaches, it appears
that an alternative solution is required.
Our solution to the problem (to be introduced in Section~\ref{s_robust}) is to maintain  {\em additional}\,
state at each node, and utilize this state to mitigate the
detrimental impact of link unreliability. To put it differently,
the additional state can be used to design an iterative
consensus algorithm {\em robust}\, to link unreliability.
In particular, the amount of state maintained by each
node $i$ is proportional to $|\scripti_i|$. In a large scale wireless network (i.e,
with large $m$) with nodes spread over large space, we would expect
that for any node $i$, $|\scripti_i|<<m$. In such
cases, the small increase in the amount of state
is a justifiable cost to achieve robustness in presence
of link unreliability.

Although $M[i,j]$ is identical (and equal to $1/D_i$)
for all $j\in O_i$ in our example above, this is not necessary.
So long as $M$ is a primitive row stochastic matrix, the
above iteration will converge
to the correct consensus value (provided that the transmissions
are always reliable).
Thus, it is possible that in a given iteration, node $i$ may want to
send different messages to different nodes in $O_i$. This goal can be
achieved by performing unicast operation to each node in $O_i$.
In this situation as well, two approaches analogous to Approaches 1 and
2 may be used. The first approach would be to reliably deliver the unicast
messages, using as many retransmissions as necessarys. The
second approach may be to transmit each message just once.
In both cases, it is possible that the iterations may not be
performed correctly. To simplify the discussion in this paper, we assume
that each node $i$ needs to transmit identical message to
the nodes in $\scripto_i$. However, it is easy to  extend the proposed
scheme so that it is applicable to the more general scenario as well.

\section{Robustification of Ratio Consensus Algorithm}
\label{s_robust}

In this section, we present the proposed ratio consensus algorithm
that is robust in presence of link unreliability. The correctness
of the proposed algorithm is established in Section~\ref{convergence_analysis}.
As before, each node maintains state variables $y_k[i]$ and $z_k[i]$.
Additional state maintained at each node will be defined soon.
Iterative computation is performed to maintain $y_k$ and $z_k$.
For brevity,   we will focus on presenting the iterations for $y_k$, but iterations for $z_k$
are analogous, with the difference being in the initial state. The initial values of $y$ and $z$ are assumed\footnote{The assumption that $y_0[i]\geq 0,~\forall i$,
can be relaxed, allowing for arbitrary values for $y_0[i]$.} to
satisfy the following conditions:
\begin{enumerate}
\item $y_0[i]\geq 0, ~ \forall i$,
\item $z_0[i]\geq 0, ~ \forall i$,  
\item $ \sum_i z_0[i] > 0$.
\end{enumerate}
Our goal for the robust iterative consensus algorithm is to allow
each node $i$ to compute (asymptotically) the ratio
\[
\frac{ \sum_i y_0[i]}{\sum_i z_0[i]}.
\]
With a suitable choice of $y_0[i]$ and $z_0[i]$, different functions
may be calculated \cite{DoHa:10}. In particular, if the initial input
of node $i$ is denoted as $v_i$, then by setting $y_0[i]=w_iv_i$
and $z_0[i]=w_i$, where $w_i\geq 0,~\forall i$,
the nodes can compute the weighted average $\frac{\sum_i w_iv_i }{\sum_i w_i}$; 
with $w_i=1,~\forall i\in\scriptv,$ the nodes calculate average consensus.

\subsection{Intuition Behind the Robust Algorithm}
\label{ss_intuition}

To aid our presentation, let us introduce the notion of ``mass.'' The
initial value $y_0[i]$ at node $i$ is to be viewed as its initial mass.
If node $i$ sends a message $v$ to another node $j$, that can be
viewed as a ``transfer'' of an amount of mass equal to $v$ to node $j$.
With this viewpoint, it helps to think of each step $k$
as being performed over a non-zero interval of time.
Then, $y_k[i]$ should be viewed
as the mass at node $i$ at the {\em end}\, of time step $k$
(which is the same as the  {\em start} of step $k+1$).
Thus, during step $k$, each node $i$ transfers (perhaps unsuccessfully,
due to unreliable links) some mass
to nodes in $\scripto_i$, the amount being a function of $y_{k-1}[i]$.
The mass $y_k[i]$ is the accumulation of the mass that $i$ receives
in messages from nodes in $\scripti_i$ during step $k$.

Now, $\sum_i y_0[i]$ is the total
mass in the system initially. If we implement iteration
(\ref{e_y_fixed}) in the absence of packet drops, then for all iterations $k$
\[
\sum_i y_k[i] = \sum_i y_0[i].
\]
That is, the total mass in the system remains constant.
This invariant is maintained because $M$ is a row stochastic matrix.
However, if a message $v$ sent by node $i$ is not received by some
node $j\in\scripto_i$, then the mass in that message is ``lost,''
resulting in  reduction of the total mass in the system. 

Our robust algorithm is motivated by the desire to avoid the loss
of mass in the system, even in the presence of unreliable links.
The proposed algorithm uses Approach 2 for transmission of messages.
In particular, in our algorithm (and as in the original ratio
consensus), at each step $k$, each node $i$
wants to transfer $\mu_k[i]=y_{k-1}[i]/D_i$ amount of mass to each
node in $\scripto_i$. For this purpose, node $i$ broadcasts\footnote{In the
more general case, node $i$ may want to transfer different amounts
of mass to different nodes in $\scripto_i$. In this case, node $i$
may send (unreliable) unicast messages to these neighbors. The treatment in
this case will be quite similar to the restricted case assumed in our
discussion, except that node $i$ will need to separately track mass transfers to each of its out-neighbors.}
message $\mu_k[i]$.
To make the algorithm robust, let us assume that,
for each link $(i,j)\in\scripte$, a ``virtual buffer''
is available to store the mass that is ``undelivered'' on the link.
For each node $j\in\scripto_i$, there are two possibilities:
\begin{itemize}
\item [(P1)] Link $(i,j)$ is not reliable in slot $k$: In this case,
	message $\mu_k[i]$ is {\em not}\, received by node $j$.
	Node $i$ believes that it has transferred the mass to $j$
	(and thus, $i$ does not include that mass in its own state
	$y_k[i]$), and at the same time, that mass is
	not received at node $j$, and therefore, not
	included in $y_k[j]$.
 	Therefore, let us view this missing mass as being ``buffered
	on'' link $(i,j)$ in a {\em virtual buffer}. The virtual buffer for each directed link $(i,j)$ will be
	viewed as a {\em virtual node} in the network.
	Thus, when link $(i,j)$ is unreliable, the mass is
	transferred from node $i$ to ``node'' $(i,j)$,
	instead of being transferred to node $j$. Note that when link $(i,j)$ is unreliable, node $j$ neither
	receives  mass directly from node $i$, nor from the virtual buffer $(i,j)$.

\item [(P2)] Link $(i,j)$ is reliable in slot $k$:
	In this case, message $\mu_k[i]$ is received by node $j$.
	Thus, $\mu_k[i]$ contributes to $y_k[j]$.
	In addition, all the mass buffered in the virtual buffer $(i,j)$
	will also be received by node $j$, and this mass will also
	contribute to $y_k[j]$.
	We will say that buffer $(i,j)$ ``releases'' its mass to node $j$.
\end{itemize}

We capture the above intuition by building an ``augmented'' network
that contains all the nodes in $\scriptv$, and also contains
additional virtual nodes, each virtual node corresponding to
the virtual buffer for a link in $\scripte$.
Let us denote the augmented networks by $\mathcal{G}^a=(\mathcal{V}^a,\mathcal{E}^a)$ where
$\mathcal{V}^a=\scriptv \cup \scripte$ and 
\[ \mathcal{E}^a=\scripte \cup \{((i,j),j)~|~(i,j)\in\scripte\}
\cup \{i,(i,j)~|~(i,j)\in\scripte\}.\]

In case (P2) above, the mass sent by node $i$,
and the mass released from the virtual buffer $(i,j)$, both contribute
to the new state $y_k[j]$ at node $j$. In particular, it will
suffice for node $j$ to only know the {\em sum}\, of the
mass being sent by node $i$ at step $k$ and the mass being released
(if any) from buffer $(i,j)$ at step $k$.
In reality, of course, there is no virtual buffer to hold the mass
that has not been delivered yet. However, an equivalent mechanism
can be implemented by introducing additional state at each node
in $\scriptv$,
which exploits the above observation. This is what we explain in the next section.

\subsection{Robust Ratio Consensus Algorithm}
\label{ss_robust}

We will mitigate the shortcomings of Approach 2 described in Section~\ref{considerations} by changing our iterations
to be tolerant to missing messages. The modified scheme has the following
features:
\begin{itemize}
\item Instead of transmitting message $\mu_k[i]=y_{k-1}[i]/D_i$ at step $k$,
 each node $i$ broadcasts at step $k$ a message with value $\sum_{j=1}^k \mu_k[i]$,
 denoted as $\sigma_k[i]$. Thus, $\sigma_k[i]$ is the total mass that node $i$
 wants to transfer to each node in $\scripto_i$ through the
 first $k$ steps.
\item Each node $i$ maintains, in addition to state variables $y_k[i]$ and $z_k[i]$,
	also a state variable $\rho_k[j,i]$ for each node $j\in \scripti_i$; 
	$\rho_k[j,i]$ is the total mass that node $i$ has received
	either directly from node $j$, or via virtual buffer $(j,i)$,
	through step $k$.
\end{itemize}
The computation performed at node $i$ at step $k\geq 1$ is as follows.
Note that $\sigma_0[i]=0$, $\forall i\in\scriptv$ and $\rho_0[i,j]=0,~\forall (i,j)\in\scripte$.
\begin{eqnarray}
\sigma_k[i] & = & \sigma_{k-1}[i] + y_{k-1}[i]/D_i, \label{e_sigma}\\
\rho_k[j,i] & = & \left\{ \begin{tabular}{ll}
		$\sigma_k[j]$, & \mbox{~if $(j,i)\in \scripte$ and message $\sigma_k[j]$ is received by $i$
			from $j$ at step $k$,}\\
		$\rho_{k-1}[j,i],$ & \mbox{~if $(j,i)\in \scripte$ and no message is received by $i$
                        from $j$ at step $k$},
		\end{tabular}\right. \label{e_rho} \\
y_k[i] & = & \sum_{j\in \scripti_i} (\rho_k[j,i] - \rho_{k-1}[j,i]). \label{e_y_rand}
\label{e_y_k}
\end{eqnarray}
When link $(j,i)\in\scripte$ is reliable, $\rho_k[j,i]$
becomes equal to $\sigma_k[j]$: this is reasonable, because $i$ receives
any new mass sent by $j$ at step $k$, as well as any mass released by
buffer $(j,i)$ at step $k$. On the other hand, when link $(j,i)$ is unreliable,
then $\rho_k[j,i]$ remains unchanged from the previous iteration,
since no mass is received from $j$ (either directly or via
virtual buffer $(j,i)$).
It follows that, the total new mass received by node $i$ at step $k$, 
either from node $j$
directly or via buffer $(j,i)$, is given by 
$
\rho_k[j,i] - \rho_{k-1}[j,i],
$
which explains (\ref{e_y_k}).\footnote{As per the algorithm specified above,
observe that the values of $\sigma$ and $\rho$  increase monotonically
with time. This can be a concern for a large number of steps in practical implementations. However,
this concern can be mitigated by ``resetting" these values, e.g., via the exchange of additional
information between neighbors
(for instance, by piggybacking cumulative acknowledgements, which
will be delivered whenever the links operate reliably).}

\section{Robust Algorithm Formulation as an Inhomogeneous Markov Chain} \label{InhMC}
\label{s_robust_algo}

In this section, we reformulate each iteration performed by the robust algorithm as an inhomogeneous Markov chain whose transition matrix takes values from a finite set of matrices. We will also discuss  some properties of these matrices, and analyze the behavior of their products, which helps
 in establishing the convergence of the robustified ratio consensus algorithm.

\subsection{Matrix Representation of Each Individual  Iteration}
\label{ss_matrix}

The matrix representation is obtained by observing an equivalence
between the iteration in  (\ref{e_sigma})--(\ref{e_y_rand}), and an iterative algorithm (to be introduced soon)
defined on  the augmented network described in Section~\ref{ss_intuition}. 
The vector state of the augmented network consists of $n=m+|\scripte|$ elements,
corresponding to the mass held by each of the $m$ nodes,
and the mass held by each of the
$|\scripte|$ virtual buffers: these $n$ entities are represented
by as many nodes in the augmented network. 


With a slight abuse of notation, let us denote by $y_k$
the state of the nodes in the augmented network $\mathcal{G}^a$.
The vector $y_k$ for $\mathcal{G}^a$ is an augmented version of
$y_k$ for $\mathcal{G}$. In addition to $y_k[i]$
for each $i\in\scriptv$, the augmented $y_k$ vector also includes
elements $y_k[(i,j)]$ for each $(i,j)\in\scripte$, with
$y_0[(i,j)]=0$.\footnote{Similarly, $z_0[(i,j)]=0$.} Due to the manner in which the $y_k[i]$'s are updated,
$y_k[i],~i\in\scriptv$, are identical in the original network
and the augmented network; therefore, we do not distinguish between
them. We next translate the iterative algorithm in
(\ref{e_sigma})--(\ref{e_y_k})
into the   matrix form 
\begin{align}
y_k = y_{k-1} M_k, \label{iter_iter}
\end{align}
for appropriately  row-stochastic matrices $M_k$ (to be defined soon) that might vary as the algorithm progresses (but nevertheless take values from a finite set of possible matrices).

Let us define an indicator variable $X_k[j,i]$ for each link $(j,i)\in\scripte$ at each time step $k$ as follows:
\begin{align}
X_k[j,i] = \left\{
	\begin{tabular}{ll}
	1, ~~~~ if link $(j,i)$ is reliable at time step $k$, \\
	0, ~~~~ otherwise.  \label{indicator_var}
	\end{tabular}
	\right.
\end{align}
We will now reformulate the iteration (\ref{e_sigma})--(\ref{e_y_k}) and show how, in fact, it can be described in matrix form as shown in \eqref{iter_iter}, where the matrix transition matrix $M_k$ is a function of the indicator variables defined in  \eqref{indicator_var}. First, by using the indicator variables at time step $k$, as defined in \eqref{indicator_var}, it follows from  \eqref{e_sigma} that
\begin{align}
& \rho_k[j,i]=X_k[j,i]\sigma_k[j]+(1-X_k[j,i])\rho_{k-1}[j,i].  \label{e_rho_reform}
\end{align}
Now, for $k\geq 0$, define $\nu_k[j,i]=\sigma_k[j]-\rho_k[j,i]$
(thus $\nu_0[j,i]=0$). Then, it follows from \eqref{e_sigma} and \eqref{e_rho_reform} that 
\begin{align}
\nu_k[j,i]=(1-X_k[j,i]) \left(\frac{y_{k-1}[j]}{D_j}+\nu_{k-1}[j,i]\right), ~~~~k\geq 1. \label{nu_def}  \end{align}
Also, from \eqref{e_sigma} and \eqref{e_rho_reform}, it follows that \eqref{e_y_k} can be rewritten as 
\begin{align}
& y_k[i] =   \sum_{j\in \scripti_i} X_k[j,i]\left(\frac{y_{k-1}[j]}{D_j}+\nu_{k-1}[j,i]\right), ~~~~k\geq 1. \label{y_k_reform}
\end{align}
At every instant $k$ that the link $(j,i)$ is not reliable, it is easy to see that the variable $\nu_k[j,i]$ increases by an amount equal to the amount that node $j$ wished to send to node $i$, but $i$ never received due to the  link failure. Similarly, at every instant $k$ that the link $(j,i)$ is reliable, the variable $\nu_k[j,i]$ becomes zero and its value at $k-1$ is received by node $i$ as can be seen in \eqref{y_k_reform}.  Thus, from \eqref{nu_def} and \eqref{y_k_reform}, we can think of the variable $\nu_k[j,i]$ as the state of a virtual node that buffers  the mass that node $i$ does not receive from node $j$ every time the link $(j,i)$ fails. It is important to note that the $\nu_k[j,i]$'s are virtual variables  (no node in $\mathcal{V}$ computes $\nu_k$) that just result from combining, as explained above, variables that the nodes in $\mathcal{V}$ compute. The reason for doing this is that the resulting model is equivalent to an inhomogeneous Markov chain. This can be   easily seen  by stacking up \eqref{y_k_reform} for all nodes indexed in $\mathcal{V}$, i.e., the computing nodes, and  \eqref{nu_def} for all virtual buffers $(j,i)$, with $(j,i) \in \scripte$, and rewriting the resulting expressions in matrix form, from where the expression in \eqref{iter_iter} results. 

\subsection{Structure and Properties of the Matrices $M_k$} 
Next, we  discuss the sparsity structure of the $M_k$'s and obtain their entries by inspection of \eqref{nu_def} and \eqref{y_k_reform}. Additionally, we will  explore some properties of the $M_k$'s that will be helpful in the analysis conducted in Section~\ref{ergodicity_analysis} for characterizing  the behavior of each of the individual iterations. 

\subsubsection{Structure of $M_k$}Let us first define the entries in row $i$ of matrix $M_k$ that corresponds to $i\in\scriptv$. For $(i,j)\in\scripte$, there are two possibilities:
$X_k[i,j]=0$ or $X_k[i,j]=1$. If $X_k[i,j]=0$, then the mass
$\mu_k[i]=y_k[i]/D_i$ that node $i$ wants to send to node $j$ is added
to the virtual buffer $(i,j)$. Otherwise, no new mass from
node $i$ is added to buffer $(i,j)$. Therefore,
\begin{eqnarray}
& M_k[i,(i,j)] = (1-X_k[i,j])   / D_i.  \label{entry1}
\end{eqnarray}
The above value is zero if link $(i,j)$ is reliable at step $k$, and $1/D_i$
otherwise.
Similarly, it follows that
\begin{eqnarray}
&M_k[i,j] = X_k[i,j] / D_i ,
\end{eqnarray}
which is zero whenever  link $(i,j)$ is unreliable at step $k$, and $1/D_i$
otherwise.
Observe that for each $j\in \scripto_i$,
\begin{eqnarray}
& M_k[i,j] + M_k[i,(i,j)] = 1/D_i,
\end{eqnarray}
with, in fact, one of the two quantities zero and the other equal to $1/D_i$. For $(i,j)\notin\scripte$, it naturally follows that $M_k[i,j]=0$.
Similarly, 
\begin{eqnarray}
& M_k[i,(s,r)]=0, ~~~~\mbox{whenever ~} i\neq s \mbox{~~\text{and}~~} (s,r)\in\scripte.
\end{eqnarray}
Since $|\scripto_i| = D_i$, all the elements in row $i$ of matrix $M_k$ add up to 1.

Now define row $(i,j)$ of matrix $M_k$, which describes how the mass of the virtual buffer $(i,j)$, for $(i,j)\in\scripte$, gets distributed. When link $(i,j)$ works reliably at time step $k$ (i.e., $X_k[i,j]=1$),
	all the mass buffered on link $(i,j)$ is transferred to node $j$; otherwise,
	no mass is trasferred from buffer $(i,j)$ to node $j$ and the buffer
	retains all its previous mass and increases it by a quantity equal to the mass that node $i$ fail to send to node $j$.
	These conditions are captured by defining $M_k$ entries as follows:
	\begin{eqnarray}
	M_k[(i,j),j] & = & X_k[i,j],  \\
	M_k[(i,j),(i,j)] & = & 1-X_k[i,j].
	\end{eqnarray}
	Also, for obvious reasons,
	\begin{eqnarray}
	M_k[(i,j),p] & = & 0,~ ~ \forall p \neq j, ~~ p\in\scriptv, \\
	M_k[(i,j),(s,r)] & = & 0,~~ \forall (i,j)\neq (s,r),~~(s,r)\in\scripte. \label{entryk}
	\end{eqnarray}  
Clearly, all the entries of the row labeled $(i,j)$ add up to 1, which results in  $M_k$ being a row stochastic matrix for all $k\geq 1$.

\subsubsection{Properties of $M_k$}

Let us denote the set of all possible instances (depending
on the values of the indicator variables $X_k[i,j],~(i,j) \in \mathcal{E},~k\geq 1$) of matrix $M_k$ as $\scriptm$. The matrices in the set $\scriptm$ have the following properties:
\begin{itemize}

\item [(M1)] \textit{The set $\scriptm$ is  finite.}  

Each distinct matrix in
$\scriptm$ corresponds to  different instantiations of the  
indicator variables defined in \eqref{indicator_var}, resulting in exactly $2^{|\scripte|}$ distinct matrices in
$\scriptm$.

\item[(M2)] \textit{Each matrix in $\scriptm$ is a finite-dimensional square row stochastic matrix.}  

The number of rows of each matrix $M_k\in \scriptm$, as defined above,
is $n=m+|\scripte|$, which is finite.
Also, from   (\ref{entry1})--(\ref{entryk}), theses matrices  are square row-stochastic matrices.

\item [(M3)]
\textit{Each positive element of any matrix in $\scriptm$ is lower bounded
by a positive constant.} 

Let us denote  this lower bound as $c$. Then,  due to the manner in which matrices in $\scriptm$
are constructed, we can define $c$ to be the positive
constant obtained as
\[ c = \min_{i,j,M~| M\in\scriptm,M[i,j]>0} ~ M[i,j].
\]

\item [(M4)]  The matrix $M_k,~k \geq 0$, may be chosen to be any matrix
	$M\in\scriptm$ with a non-zero probability. 
	The choice of the transition matrix at each time step is independent and identically distributed (i.i.d.) due to the assumption that link failures are independent (between nodes and time steps).
	
	{\em Explanation:}  The probability
	distribution on $\scriptm$ is a function of the probability 
	distribution on the link reliability. In particular, 
	if a certain $M\in \scriptm$ is obtained when the
	links in $\scripte'\subseteq\scripte$
	are reliable, and the remaining links are unreliable,
	then the probability that $M_k=M$ is equal to
\begin{align}
\Pi_{(i,j)\in\scripte'}~q_{ij} ~ \Pi_{(i,j)\in\scripte-\scripte'}~(1-q_{ij}).
\end{align}

\item [(M5)]
\textit{For each $i\in\scriptv$, there exists a finite positive integer
$l_i$ such that it is possible to find $l_i$ matrices in
$\scriptm$ (possibly with repetition) such that their product (in a chosen
order) is a row stochastic matrix with the column that corresponds
to node $i$ containing strictly positive entries.}

This property states that, for each $i\in\scriptv$, there exists a  matrix 
$T_i^*$, obtained as  the product of $l_i$ matrices in $\scriptm$ that  has
 the following properties:
\begin{eqnarray}
T_i^*[j,i] & > & 0, ~~~~~ \forall~j\in\scriptv, \label{prop1} \\
T_i^*[(j_1,j_2),i] & > & 0, ~~~~~ \forall~(j_1,j_2)\in\scripte. \label{prop2}
\end{eqnarray}  
This follows from the fact that the underlying graph $\mathcal{G}^a$ is strongly connected (in fact, it can be easily shown that $l_i \leq m$).
To simplify the presentation below, and due to the self-loops, we can take $l_i$ to be equal
to a constant $l$, for all $i\in\scriptv$. However, it should be easy to
see that the arguments below can be generalized to the case when the $l_i$'s
may be different.

We can also show that under our assumption for link failures,
there exists a single matrix, say $T^*$, which simultaneously satisfies the conditions in \eqref{prop1}--\eqref{prop2}
for all $i\in\scriptv$.
\comment{+++++++++++++++++++++
In \eqref{indicator_var}, since $\Pr \{X_k[i,j]=1 \}>0,~ \forall k\geq 1, \forall ~(i,j) \in \mathcal{E}$, it follows that, with a non-zero probability, there exits a matrix $M^a \in \mathcal{M}$  that corresponds to all the links in $\scripte$ being reliable, and it is easy to see   from (\ref{entry1})--(\ref{entryk}) that $M^a$  has the following structure
\begin{align}
M^a=\begin{bmatrix} M && 0 \\ Q && 0 \end{bmatrix}, \label{the_matrix_decomp_2}
\end{align}
where  $M \in \mathbb{R}^{m  \times m}$ is the same row stochastic matrix describing   the collective dynamics of the two iterations in \eqref{e_y_fixed} and \eqref{e_z_fixed} that defined the ratio consensus algorithm for the case  when the communication links were assumed to be perfectly reliable. As argued before, since $\mathcal{G}$ is strongly connected, by construction $M$ is primitive with a positive diagonal. The matrix $M^a$ corresponds to a homogenous Markov chain with a single recurrent class and a few transient states. It is  then easy to see that by  multiplying $M^a$ $k$ times with itself, we obtain 
\begin{align}
(M^a)^k=\begin{bmatrix} M^k && 0 \\ QM^{k-1} && 0 \end{bmatrix}, \label{the_matrix_decomp_3}
\end{align}
and since $M$ is a primitive matrix, there exists a positive integer $l$ such that all the entires in $M^l$ are strictly positive. Then, it follows that $T^*=(M^a)^l$ has the  properties in \eqref{prop1} and \eqref{prop2}.


An alternative explanation for this property is as follows.
++++++++++++++++++++++++++++++++
}
 When all the links in the network operate reliably,
network $\mathcal{G}(\scriptv,\scripte)$ is strongly connected (by assumption).
Since $\mathcal{G}$ is strongly connected, there is a directed path between every pair of nodes $i$ and
$j$, i.e.,   $i,j\in\scriptv$ .
In the augmented network $\mathcal{G}^a$, for each
$(i,j)\in\scripte$, there is a link from node $i$ to node $(i,j)$,
and a link from node $(i,j)$ to node $j$. Thus, it should be clear that the
augmented network $\mathcal{G}^a$ is strongly connected as well.
Consider a spanning tree rooted at node 1, such that all the nodes
in $V=\scriptv\cup\scripte$ have a directed path towards node 1, and also
a spanning tree in which all the nodes have directed paths {\em from}
node 1. Choose that matrix, say $M^*\in\scriptm$, which corresponds to all the
links on these two spanning trees, as well as self-loops
at all $i\in\scriptv$, being reliable. If the total number
of links that are thus reliable is $e$, it should be obvious that $(M^*)^e$ will
contain only non-zero entries in columns corresponding to $i\in\scriptv$.
Thus, $l$ defined above may be chosen as $e$.
There are several other ways of constructing $T^*$, some of which may result
in a smaller value of $l$.


\end{itemize}

\section{Ergodicity Analysis of Products of Matrices $M_k$} \label{ergodicity_analysis}
\label{s_ergodicity}

We will next analyze the ergodic behavior of  the \textit{forward product} $T_{k}=M_{1}M_{2}\dots M_{k}= \Pi_{j=1}^{k}M_j$, where $M_j \in \mathcal{M},~\forall j= 1, 2,\dots,k$. Informally defined, weak ergodicity of $T_{k}$ obtains if the rows of $T_{k}$ tend to equalize as $k \rightarrow \infty$.
 In this work, we focus on the weak ergodicity notion, and  establish probabilistic statements pertaining the  ergodic behavior of  $T_{k}$. The analysis   builds upon a large body of literature on products of nonnegative matrices (see, e.g., \cite{Se:06} for a comprehensive account). First, we introduce the basic toolkit  adopted from  \cite{Ha:58,Wo:63,Se:06},  and then use it to analyze the ergodicity of $T_{k}$.

\subsection{Some Results Pertaining  Coefficients of Ergodicity}


Informally speaking, a coefficient  of ergodicity of a matrix $A$ characterizes how different two rows of $A$ are.
For a row stochastic matrix $A$, proper\footnote{Any scalar function $\tau(\cdot)$ continuous on the set of $n \times n$ row stochastic matrices, which satisfies $0 \leq \tau(A) \leq 1$, is said to be a proper coefficient of ergodicity if $\tau(A)=0$ if and only if $A=e^Tv$, where $e$ is the all-ones row vector, and $v\geq0$ is such that $ve^T=1$ \cite{Se:06}.
}
 coefficients of  ergodicity   $\delta(A)$ and $\lambda(A)$ are defined as:
\begin{align}
\delta(A) & :=   \max_j ~ \max_{i_1,i_2}~ | A[i_1,j]-A[i_2,j] |, \label{e_delta} \\
\lambda(A) & :=  1 - \min_{i_1,i_2} \sum_j \min(A[i_1,j], A[i_2,j]). \label{e_lambda}
\end{align}
It  is easy to see that  $0\leq \delta(A) \leq 1$ and $0\leq \lambda(A) \leq 1$, and that the rows are identical if and only if $\delta(A)=0$. Additionally, $\lambda(A) = 0$ if and only if $\delta(A) = 0$.

The next result establishes a relation between the coefficient of ergodicity $\delta(\cdot)$ of a product of row stochastic matrices, and the coefficients of ergodicity $\lambda(\cdot)$ of the individual matrices defining the product. This result will be used in the proof of Lemma~\ref{l_t}. It  was established in \cite{Ha:58} and also follows from the more general statement of Theorem 4.8 in \cite{Se:06}. 

\begin{prop}
\label{p_delta_product}
For any $p$ square row stochastic matrices $A_1,A_2,\dots A_{p-1}, A_p$, 
\begin{align}
\delta(A_1A_2\cdots A_{p-1}A_p) ~\leq ~
 \left( \Pi_{i=1}^{p-1} \lambda(A_i) \right) \delta(A_p) ~\leq ~
 \Pi_{i=1}^p ~ \lambda(A_i). \label{hajnal}
\end{align}
\end{prop}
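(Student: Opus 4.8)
The plan is to reduce the $p$-fold statement to a single-step submultiplicativity inequality and then iterate. Concretely, I would first establish the two auxiliary facts that, for any row stochastic matrices $A$ and $B$ of compatible dimensions,
\[
\delta(A) \le \lambda(A), \qquad \delta(AB) \le \lambda(A)\,\delta(B).
\]
The first of these immediately yields the rightmost inequality in \eqref{hajnal}, since replacing $\delta(A_p)$ by the larger quantity $\lambda(A_p)$ in the middle expression can only increase it. To prove $\delta(A)\le\lambda(A)$ I would use the identity $\lambda(A) = \max_{i_1,i_2}\tfrac{1}{2}\sum_j |A[i_1,j]-A[i_2,j]|$, which follows from $\min(a,b) = a-(a-b)^+$ together with the fact that the rows of $A$ sum to one, and then invoke the elementary observation that, because any difference-of-rows vector has entries summing to zero, its largest entry in absolute value is at most half its $\ell_1$ norm.

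The core of the argument is the single-step inequality $\delta(AB)\le\lambda(A)\,\delta(B)$. Here I would fix a column $k$ and two rows $i_1,i_2$ and write
\[
(AB)[i_1,k]-(AB)[i_2,k] = \sum_j \big(A[i_1,j]-A[i_2,j]\big)\,B[j,k].
\]
Setting $d_j = A[i_1,j]-A[i_2,j]$ and splitting $d_j = d_j^+ - d_j^-$ into positive and negative parts, the row-stochasticity of $A$ forces $\sum_j d_j^+ = \sum_j d_j^- =: s$, and one checks that $s = 1 - \sum_j \min(A[i_1,j],A[i_2,j]) \le \lambda(A)$. When $s>0$, the normalized weights $d_j^+/s$ and $d_j^-/s$ are probability vectors, so $\sum_j (d_j^+/s)\,B[j,k]$ and $\sum_j (d_j^-/s)\,B[j,k]$ are both convex combinations of the entries $\{B[j,k]\}_j$ of column $k$; hence each lies between $\min_j B[j,k]$ and $\max_j B[j,k]$, so their difference has absolute value at most the spread $\max_j B[j,k]-\min_j B[j,k]\le \delta(B)$. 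Multiplying by $s\le\lambda(A)$ gives $\lvert(AB)[i_1,k]-(AB)[i_2,k]\rvert \le \lambda(A)\,\delta(B)$, and taking the maximum over $k$ and over the pair $(i_1,i_2)$ completes the step.

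With the single-step inequality in hand, the leftmost bound in \eqref{hajnal} follows by a straightforward induction on $p$: writing $A_1\cdots A_p = A_1\,(A_2\cdots A_p)$ and applying $\delta(AB)\le\lambda(A)\,\delta(B)$ repeatedly peels off one factor $\lambda(A_i)$ at each stage until only $\delta(A_p)$ remains. I expect the main obstacle to lie in the single-step inequality, and within it in the bookkeeping of the positive/negative decomposition: the crux is recognizing that the common mass $s$ of the two parts equals exactly $1-\sum_j\min(A[i_1,j],A[i_2,j])$, which is precisely what ties the contracting factor to $\lambda(A)$, and that normalizing by $s$ turns the two partial sums into genuine convex combinations of a single column of $B$, so that their difference is controlled by $\delta(B)$.
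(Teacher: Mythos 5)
Your proof is correct. Note that the paper itself does not prove Proposition~\ref{p_delta_product}: it simply cites Hajnal's 1958 paper and Theorem~4.8 of Seneta, so there is no in-paper argument to compare against. What you have written is essentially the classical derivation of Hajnal's inequality: the identity $\lambda(A)=\max_{i_1,i_2}\tfrac12\sum_j\lvert A[i_1,j]-A[i_2,j]\rvert$, the single-step contraction $\delta(AB)\le\lambda(A)\,\delta(B)$ obtained by splitting the row difference into positive and negative parts of common mass $s=1-\sum_j\min(A[i_1,j],A[i_2,j])\le\lambda(A)$ and recognizing the two normalized sums as convex combinations of a single column of $B$, and then induction. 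All the steps check out, including the implicit handling of the degenerate case $s=0$ (where the row difference vanishes) and the final substitution $\delta(A_p)\le\lambda(A_p)$ for the rightmost bound. This is a complete, self-contained proof of a statement the paper treats as a black box, and it is the same argument one finds in the cited sources.
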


The result in \eqref{hajnal} is particularly useful to infer ergodicity of a product of matrices from the ergodic properties of the individual matrices in the product. For example, if $\lambda(A_i)$ is less than 1 for all $i$, then $\delta(A_1A_2\cdots A_{p-1}A_p)$ will tend to zero as $p \rightarrow \infty$. We will next introduce an important class of matrices for which $\lambda(\cdot)<1$.

\begin{definition}
A matrix $A$ is said to be a {\em scrambling}\, matrix, if $\lambda(A)<1$ \cite{Se:06}.
\end{definition}

In a scrambling matrix $A$, since $\lambda(A)<1$, for each pair of
rows $i_1$ and $i_2$, there exists a column $j$ (which may depend on
$i_1$ and $i_2$) such that $A[i_1,j]>0$ and $A[i_2,j]>0$, and vice-versa.
As a special case, if any one column of a row stochastic matrix $A$
contains only non-zero entries, then $A$ must be scrambling. 

%

\subsection{Ergodicity Analysis of Iterations of the Robust Algorithm}
We next analyze the ergodic properties of the products of matrices that result from each of the iterations comprising our robust algorithm. Let us focus on just one of the iterations, say  $y_k$, as the treatment  of the $z_k$ iteration is identical. As described in Section~\ref{InhMC}, the progress of the $y_k$ iteration   can be recast as an inhomogeneous Markov chain
\begin{align}
& y_k=y_{k-1}M_k,~k \geq 1, \label{MC_iter}
\end{align}
where $M_k \in \mathcal{M},~\forall k$. As already discussed, the sequence of $M_k$'s that will govern the progress of $y_k$ is determined by communication link availability. (\ref{MC_iter}). Defining $T_k=\Pi_{j=1}^k\, M_j$, we obtain:
\begin{align}
\displaystyle y_{k} & =   y_0 M_1M_2\cdots M_k \nonumber \\
\displaystyle  & = y_0  \Pi_{j=1}^{k}  M_j= y_0T_k, ~ k \geq 1. \label{e_M}
\end{align}
By convention, 
%
$\Pi_{i=k}^0 M_i = I$ for any $k\geq 1$ ($I$ denotes the $n \times n$ identity matrix).

Recalling the constant $l$ defined in (M5),
define $W_k$ as follows,
\begin{eqnarray}
W_k & = & \Pi_{j=(k-1)l+1}^{kl}M_j,~ k \geq 1,~M_j \in \mathcal{M}, \label{e_W}
\end{eqnarray}
from where it follows that
\begin{eqnarray}
T_{lk} & = & \Pi_{j=1}^k ~ W_k,~ k \geq 1.   \label{e_T_W}
\end{eqnarray}
Observe that the set of time steps ``covered'' by $W_i$ and $W_j$, $i\neq j$,
are non-overlapping.  It is also important to note for subsequent analysis that, since the $M_k$'s are row stochastic matrices  and the product of any number of row stochastic matrices is row stochastic, all the $W_k$'s and $T_k$'s are also row stochastic matrices.


Lemma~\ref{l_t}  will establish  that as the number of iteration steps goes to infinity, the rows of the matrix $T_k$ tend to equalize. For proving Lemma~\ref{l_t}, we need the result in Lemma~\ref{l_w} stated below, which establishes that there exists a nonzero probability of choosing matrices in $\mathcal{M}$ such that the $W_k$'s as defined in  (\ref{e_W}) are scrambling.

\begin{lemma}
\label{l_w}
There exist constants $w>0$ and $d<1$ such that,
with probability equal to $w$, $\lambda(W_k)\leq  d$ for $k\geq 1$, independently for different $k$.
\end{lemma}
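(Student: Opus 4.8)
The plan is to exhibit a single scrambling matrix that can be written as a product of exactly $l$ matrices from $\scriptm$, and then to use property (M4) to show that this product is realized with a fixed positive probability in every block of $l$ consecutive steps, independently across blocks. First I would use (M5) to fix a favourable realization: there is a matrix $M^*\in\scriptm$---the one in which all self-loops and all edges of the two chosen spanning trees are reliable---such that the $l$-fold product $T^*:=(M^*)^{l}$ is row stochastic and has some column $j^*$ (indeed every column indexed by a node of $\scriptv$) consisting entirely of strictly positive entries; here $l$ is exactly the constant fixed in (M5), so $W_k$ and $T^*$ are products of the same number of matrices. By the remark following the definition of scrambling, a row stochastic matrix with a strictly positive column is scrambling, hence $\lambda(T^*)<1$. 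To make the bound explicit I would invoke (M3): every positive entry of each $M\in\scriptm$ is at least $c$, so every positive entry of $T^*$ is at least $c^{l}$, since such an entry is a sum of products of $l$ positive factors each $\geq c$. Consequently, for any two rows $i_1,i_2$,
\begin{align}
\sum_j \min\big(T^*[i_1,j],T^*[i_2,j]\big)\;\geq\;\min\big(T^*[i_1,j^*],T^*[i_2,j^*]\big)\;\geq\; c^{l},
\end{align}
so that $\lambda(T^*)\leq 1-c^{l}=:d<1$, a constant independent of $k$.

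Next I would control the probability and the independence. The block product $W_k=M_{(k-1)l+1}\cdots M_{kl}$ equals $T^*$ whenever each of the $l$ factors in the block equals $M^*$. By (M4) the transition matrix at any single step equals $M^*$ with a fixed probability $p^*>0$---this is a product of per-link reliability factors and is positive provided we place any permanently reliable link ($q_{ij}=1$) into the reliable set, so that every link required to fail has failure probability $1-q_{ij}>0$---and successive steps are i.i.d. Hence
\begin{align}
\Pr\{W_k=T^*\}=(p^*)^{l}=:w>0,
\end{align}
and on this event $\lambda(W_k)=\lambda(T^*)\leq d$; thus $\lambda(W_k)\leq d$ holds with probability at least $w$, taking $w$ to be the probability of this specific favourable event. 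Finally, since the blocks defining $W_k$ and $W_{k'}$ for $k\neq k'$ occupy disjoint sets of time steps and, by (M4), the matrix choices at distinct steps are independent, the events $\{W_k=T^*\}$ are mutually independent, which yields the asserted independence across $k$.

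The only genuinely delicate point is the first step---guaranteeing that a scrambling matrix arises as a product of exactly $l$ matrices of $\scriptm$. Once (M5) supplies a row stochastic product with a strictly positive column, the scrambling property, the explicit constant $d=1-c^{l}$, and the positivity and independence of the probabilities all follow routinely from the elementary characterization of scrambling matrices together with (M3) and (M4). I would also remark that it suffices here to secure $\lambda(W_k)\le d$ on one targeted event of probability $w$; the stronger statement that the \emph{total} probability of $\{\lambda(W_k)\le d\}$ may be larger is irrelevant for the subsequent use of this lemma in the proof of Lemma~\ref{l_t}.
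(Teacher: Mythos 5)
Your proposal is correct and follows essentially the same route as the paper: both arguments use (M5) to exhibit a scrambling realization of the length-$l$ block product, (M4) for its positive probability, and the disjointness of the blocks for independence across $k$. The only (harmless) difference is that you extract explicit constants $d=1-c^{l}$ via (M3) and $w=(p^*)^{l}$ from one targeted event, whereas the paper takes $w$ to be the total probability that $W_k$ is scrambling and obtains $d<1$ by maximizing $\lambda$ over the finite, non-empty set of scrambling instances of $W_k$; both choices suffice for the use made of the lemma in Lemma~\ref{l_t}.
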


\begin{proof}
Each $W_k$ matrix is a product of
$l$ matrices from the set $\scriptm$.
The choice of the $M_k$'s that form $W_i$ and $W_j$ is independent
for $i\neq j$, since $W_i$ and $W_j$ ``cover'' non-overlapping intervals of
time. Thus, under the $i.i.d.$ assumption for selection of matrices
from $\scriptm$ \big(property (M4)\big), and property (M5), it follows that, with a non-zero probability (independently for $W_k$ and $W_{k'}$ for $k \neq k'$), matrix $W_k$ for each $k$ is scrambling. Let us denote by $w$ the probability that $W_k$ is scrambling.

Let us define $\scriptw$ as the set of all possible instances of $W_k$ that are scrambling. The set $\scriptw$ is finite because the set $\scriptm$ is finite,
and $\scriptw$ is also non-empty (this follows from the discussion of
(M5)). Let us define $d$ as the tight upper
bound on $\lambda(W)$, for $W\in\scriptw$, i.e.,
\begin{eqnarray}
d ~\equiv~\max_{W\in \scriptw}~ \lambda(W) \label{e_d}.
\end{eqnarray}
Recall that $\lambda(A)$
for any scrambling matrix $A$ is strictly less than 1.
Since $\scriptw$ is non-empty and finite, and contains only scrambling matrices,
it follows that
\begin{align}
d < 1.
\end{align}
 \end{proof}

\begin{lemma}
\label{l_t}
 There exist constants $\alpha$ and $\beta$ ($0<\alpha<1$, $0\leq\beta<1$) such that,
with probability greater than $ (1-\alpha^{k})$, $\delta(T_k)\leq \beta^k$ for $k\geq 8l/w$.
\end{lemma}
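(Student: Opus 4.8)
The plan is to combine the deterministic submultiplicative estimate of Proposition~\ref{p_delta_product} with a concentration argument applied to the (random) number of scrambling blocks guaranteed by Lemma~\ref{l_w}. The deterministic half does essentially all the algebraic work; the genuinely new content is probabilistic.

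First I would reduce $\delta(T_k)$ to a product of block coefficients. For an arbitrary $k$, set $q=\lfloor k/l\rfloor$ and split the forward product as $T_k = W_1W_2\cdots W_q\,R$, where $W_1,\dots,W_q$ are the length-$l$ blocks of \eqref{e_W} and $R=M_{ql+1}\cdots M_k$ gathers the remaining fewer-than-$l$ factors. Since $R$ is a product of row stochastic matrices it is row stochastic, so $\delta(R)\le 1$. Applying Proposition~\ref{p_delta_product} to the $q+1$ factors $W_1,\dots,W_q,R$ yields
\begin{align}
\delta(T_k)\ \le\ \Big(\prod_{i=1}^{q}\lambda(W_i)\Big)\,\delta(R)\ \le\ \prod_{i=1}^{q}\lambda(W_i).
\end{align}
This handles the leftover partial block cheaply and leaves a product over full blocks.

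Next comes the probabilistic step. By Lemma~\ref{l_w}, each $W_i$ satisfies $\lambda(W_i)\le d<1$ with probability $w$, independently across $i$ (independence is exactly the i.i.d.\ property (M4) together with the non-overlapping supports of the $W_i$). Letting $S$ be the number of indices $i\le q$ with $\lambda(W_i)\le d$, the variable $S$ is $\mathrm{Binomial}(q,w)$; since every $W_i$ is row stochastic we always have $\lambda(W_i)\le 1$, so the displayed bound gives $\delta(T_k)\le d^{S}$. A multiplicative Chernoff bound on the lower tail, $\Pr[S\le qw/2]\le \exp(-qw/8)$, shows that on the complementary event---of probability at least $1-\exp(-qw/8)$---we have $S>qw/2$ and hence $\delta(T_k)\le d^{qw/2}$. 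Finally I would translate from $q$ to $k$: for $k\ge 8l/w$ (which forces $k\ge 2l$) one has $q=\lfloor k/l\rfloor\ge k/l-1\ge k/(2l)$, so the failure probability is at most $\exp(-wk/(16l))=\alpha^k$ with $\alpha:=\exp(-w/(16l))\in(0,1)$, while on the good event $\delta(T_k)\le d^{wk/(4l)}=\beta^k$ with $\beta:=d^{w/(4l)}\in[0,1)$. These are precisely the asserted constants and ranges.

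The hard part is not the matrix inequality---Proposition~\ref{p_delta_product} supplies it directly---but the passage from ``each block is scrambling with a fixed positive probability'' to an \emph{exponentially} small failure probability and an exponentially small value bound simultaneously. The key realization is that the count of scrambling blocks is a sum of i.i.d.\ indicators, so a sharp concentration inequality applies; the remaining care lies in reconciling the Chernoff rate with the prescribed threshold $k\ge 8l/w$ and in absorbing the floor $\lfloor k/l\rfloor$ (via $q\ge k/(2l)$) so that clean constants $\alpha,\beta$ in the required ranges emerge.
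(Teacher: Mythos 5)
Your proposal is correct and follows essentially the same route as the paper: the same block decomposition $T_k = W_1\cdots W_q R$ with $q=\lfloor k/l\rfloor$, the same application of Proposition~\ref{p_delta_product}, the same Chernoff lower-tail bound with $\phi=1/2$ on the binomial count of scrambling blocks, and the same translation $q\ge k/(2l)$ for $k\ge 8l/w$. The only (immaterial) difference is that you obtain the slightly sharper value $\beta=d^{w/(4l)}$ by using $S> qw/2$ directly, whereas the paper takes an extra floor and settles for $\beta=d^{w/(8l)}$; both satisfy the lemma as stated.
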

\begin{proof}
Let $k^* = \left\lfloor \frac{k}{l} \right\rfloor$
and $\Delta=k-lk^*$. Thus, $0\leq \Delta<l$.  From (\ref{e_M}) through (\ref{e_T_W}), observe that
\[ T_k =  T_{lk^*+\Delta} = T_{lk^*} ~ \Pi_{j=1}^{\Delta}~ M_{lk^*+j},
\]
where $T_{lk^*}$ is the product
of $k^*$ of $W_j$ matrices, where $1\leq j\leq k^*$.
As per Lemma~\ref{l_w}, for each $W_j$, the probability that
$\lambda(W_j)\leq d<1$ is equal to $w$.
Thus, the expected number of scrambling matrices among the $k^*$ matrices
is $w k^*$. Denote
by $S$ the actual number of scrambling $W_j$ matrices among the $k^*$ matrices.
Then the Chernoff lower tail bound tells us that,
for any $\phi>0$,
\begin{align}
&\Pr\{S < (1-\phi)E(S)\}   <   e^{-E(S) \phi^2/2}\\
\Rightarrow ~~~~~ &\Pr\{S < (1-\phi)(w k^*)\}  <  e^{-(w k^*) \phi^2/2}.
\end{align}
Let us choose $\phi=\frac{1}{2}$. Then,
\begin{align}
&\Pr\{S < (w k^*)/ 2\}   <   e^{-w k^* /8 } \\
\Rightarrow ~~~~ &\Pr\{S \geq w k^* / 2\}   >   1-e^{-w k^* /8 }.
\end{align}
Thus,
at least $\left\lfloor w k^*/2\right\rfloor$ of the $W$ matrices from the $k^*$ matrices
forming $T_{lk^*}$ are scrambling (each with $\lambda$ value $\leq d$,
by Lemma~\ref{l_w}) with probability greater than $ 1-e^{- w k^* /8 }$. 
Proposition~\ref{p_delta_product} then implies that
\[
\delta(T_k) = \delta(T_{lk^*+\Delta}) = 
 \delta\left(\left(\Pi_{i=1}^{k^*} W_i\right)
\left(\Pi_{i=1}^\Delta M_{lk^*+i}\right)\right)
	 \leq \left(\Pi_{i=1}^{k^*} \lambda(W_i)\right)\left(\Pi_{i=1}^\Delta \lambda(M_{lk^*+i})\right)
\]
Since at least $\left\lfloor w k^*/2\right\rfloor$ of the $W_i$'s have $\lambda(W_i)\leq d$
with probability greater than $ 1-e^{- w k^* /8 }$, and $\lambda(M_j)\leq 1,~\forall j$, it follows
that
\begin{eqnarray}
\delta(T_k) \leq d^{\left\lfloor w k^*/2\right\rfloor}
\label{e_l_delta}
\end{eqnarray}
with probability exceeding
\begin{eqnarray}
 1-e^{- w k^* /8 }.
\label{e_l_prob}
\end{eqnarray}
Let us define
$\alpha=e^{-\frac{ w}{16l}}$ and $\beta= d^\frac{ w}{8l}$.
Now, if $k\geq 8l/ w$, then if follows that $k \geq 2l$, and 
\begin{align}
k^* ~ =~&\left\lfloor \frac{k}{l}\right\rfloor  \geq   \frac{k}{2l}
\label{e_k_bound_2} \\
\Rightarrow ~~~~~
&\left\lfloor \frac{ w k^*}{2} \right\rfloor
   \geq  
\left\lfloor \frac{ w k}{4l} \right\rfloor \\
\Rightarrow ~~~~~
&\left\lfloor \frac{ w k^*}{2} \right\rfloor
   \geq   \frac{ w k}{8l} \\
\Rightarrow ~~~~~
&d^{\left\lfloor \frac{ w k^*}{2} \right\rfloor}   \leq   d^{\frac{ w k}{8l}} ~~~~~
	(\mbox{because~~} 0\leq d<1) \\
\Rightarrow ~~~~~ 
& d^{\left\lfloor \frac{ w k^*}{2} \right\rfloor}   \leq   \beta^k.
\label{e_l_d}
\end{align}
Similarly, if $k\geq 8l/w$, it follows that
\begin{align}
k^* ~ =~&\left\lfloor \frac{k}{l}\right\rfloor   \geq   \frac{k}{2l}
 \label{e_k_bound_1}\\
  \Rightarrow ~~~~~& e^{- w k^*/8}  
   \leq   e^{-\frac{ w k}{16l}} ~ = ~ \alpha^k \\
  \Rightarrow ~~~~~ 
& 1 - e^{- w k^*/8}   \geq  1 - \alpha^k.
\label{e_l_e}
\end{align}
By substituting (\ref{e_l_d}) and (\ref{e_l_e}) into
(\ref{e_l_delta}) and (\ref{e_l_prob}), respectively, the result  follows.
\end{proof}
Note that $\alpha$ and $\beta$ in Lemma~\ref{l_t} are independent of time. The threshold on $k$ for which Lemma~\ref{l_t} holds, namely
$k\geq 8l/ w$, can be improved by using better bounds
in (\ref{e_k_bound_2}) and (\ref{e_k_bound_1}). Knowing a smaller 
threshold
on $k$ for which Lemma~\ref{l_t} holds can be beneficial
in a practical implementation. In the above derivation for Lemma~\ref{l_t}, we chose a
somewhat loose threshold in order to maintain a simpler form
for the probability expression (namely, $1-\alpha^{k}$)
and also a simpler expression for the bound on $\delta(T_k)$
(namely, $\beta^{k}$).

\begin{lemma}
\label{l_as}
$\delta(T_k)$ converges almost surely to 0. 
\end{lemma}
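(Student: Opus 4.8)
The plan is to upgrade the high-probability bound of Lemma~\ref{l_t} into an almost-sure statement by invoking the Borel--Cantelli lemma. Lemma~\ref{l_t} already gives, for every $k\geq 8l/w$, that $\delta(T_k)>\beta^k$ happens with probability at most $\alpha^k$, where $0<\alpha<1$. I would first consider the ``bad'' events $B_k=\{\delta(T_k)>\beta^k\}$ and observe that $\Pr(B_k)\leq\alpha^k$ for all sufficiently large $k$.

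The key step is then to sum these probabilities. Since $0<\alpha<1$, the geometric series converges:
\begin{align}
\sum_{k} \Pr(B_k) \leq \sum_{k\geq 8l/w} \alpha^k < \infty. \nonumber
\end{align}
By the first Borel--Cantelli lemma, it follows that $\Pr(B_k \text{ i.o.})=0$; that is, with probability one, only finitely many of the events $B_k$ occur. Hence, almost surely there exists a (random) finite $K$ such that for all $k\geq K$ we have $\delta(T_k)\leq\beta^k$. Since $0\leq\beta<1$, the right-hand side $\beta^k\to 0$ as $k\to\infty$, so $\delta(T_k)\to 0$ almost surely, which is exactly the claim.

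I expect the main subtlety to be a bookkeeping point rather than a deep obstacle: the events $B_k$ are \emph{not} independent (the matrices $M_j$ overlap across different $T_k$), so one must be careful to use the \emph{first} Borel--Cantelli lemma, which requires only summability of the probabilities and no independence assumption. A secondary point worth stating explicitly is that $\delta(T_k)\geq 0$ always (since $\delta$ is a proper coefficient of ergodicity), so the convergence is genuinely to $0$ and not merely eventual domination by $\beta^k$. One minor edge case is the degenerate possibility $\beta=0$, in which case $\delta(T_k)=0$ for all large $k$ almost surely and the conclusion is immediate; the argument above covers this uniformly since $\beta^k\to 0$ holds for $0\leq\beta<1$.
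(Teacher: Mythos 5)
Your proposal is correct and follows essentially the same route as the paper: bound $\Pr\{\delta(T_k)>\beta^k\}$ by $\alpha^k$ via Lemma~\ref{l_t}, sum the geometric series (handling the finitely many small $k$ trivially), and apply the first Borel--Cantelli lemma to conclude $\delta(T_k)\to 0$ almost surely. Your explicit remarks about not needing independence and about $\delta(T_k)\geq 0$ are sound but add nothing beyond the paper's argument.
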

\begin{proof}
For $k\geq 8l/w$, from Lemma~\ref{l_t}, we have that $\Pr\{\delta(T_k)> \beta^k\}\leq \alpha^{k}$, $0<\alpha<1$, $0\leq\beta<1$. Then, it is easy to see that $\sum_k\Pr\{\delta(T_k)> \beta^k\}\leq  8l/w +\sum_k  \alpha^{k} < \infty$. Then, by the first Borel-Cantelli lemma,
$
\Pr\{\mbox{the event that $\delta(T_k)> \beta^k$ occurs infinitely often}\} = 0
$.
Therefore,
$\delta(T_k)$ converges to 0 almost surely.
\end{proof}

\section{Convergence Analysis of Robustified Ratio Consensus Algorithm} \label{convergence_analysis}

The analysis below shows that the ratio  algorithm achieves asymptotic 
consensus correctly in the presence of the virtual nodes, even
if diagonals of the transition matrices ($M_k$'s) are not always strictly
positive. A key consequence is that the value of $z_k[i]$ is not necessarily greater from zero (at least not for all $k$), which creates
some difficulty when calculating the ratio $y_k[i]/z_k[i]$. As noted earlier, aside from these differences,
our algorithm is similar to that analyzed in
\cite{Benezit:10}. Our proof  has some similarities
to the proof in \cite{Benezit:10}, with the differences accounting
for our relaxed assumptions.

By defining  $z_k$ in an analogous way as we defined state $y_k$ in Section~\ref{InhMC}, the robustified version of the ratio consensus algorithm in \eqref{e_y_fixed}--\eqref{e_z_fixed} can be described in matrix form as 
\begin{eqnarray}
y_{k} & = &  y_{k-1} ~ M_k  ,~~~~~~~~~~ k \geq 1 \label{e_y_vary}, \\ 
z_{k} & = &  z_{k-1} ~ M_k   ,~~~~~~~~~~ k \geq 1 \label{e_z_vary},
\end{eqnarray}
where  $M_k  \in \mathcal{M},~k\geq 1$, $y_0[i] \geq 0,~\forall i$, $z_0[i]\geq 0,~\forall i$, and
  $\sum_j z_0[j]>0$, and $y_0[(i,j)]=z_0[(i,j)]=0,~\forall (i,j)\in\scripte$.
The same matrix $M_k$ is used at step $k$ of the  iterations in \eqref{e_y_vary} and \eqref{e_z_vary},
however, $M_k$ may vary over $k$.  
Recall that  $y_k$ and $z_k$  in (\ref{e_y_vary}) and \eqref{e_z_vary} have $n$ elements, but only the first $m$ elements correspond to computing nodes in the augmented network $\mathcal{G}^a$; the remaining entries in $y_k$ and $z_k$ correspond to virtual buffers. 

The goal of the algorithm is for each computing node to obtain a  consensus value defined as
\begin{align}
\pi^*=\frac{ \sum_j ~ y_0[j] }{\sum_j ~ z_0[j]} \label{ratio_cons}.
\end{align}
To achieve this goal, each node $i\in\scriptv$ calculates
\begin{eqnarray}
\pi_k[i] &=& \frac{y_k[i]}{z_k[i]},  \label{e_iter}
\end{eqnarray}
whenever the denominator is large enough, i.e., whenever
\begin{eqnarray}
 z_k[i]&\geq&\mu, \label{e_mu}
\end{eqnarray}
for some constant $\mu>0$ 
{\em to be defined later}.
We will show that, for each $i=1,2,\dots,m$, the sequence
$\pi_k[i]$ thus calculated asymptotically converges to 
the desired consensus value $\pi^*$. To show this, we first establish that \eqref{e_mu} occurs infinitely often, thus computing nodes can calculate  the ratio in \eqref{e_iter} infinitely often. Then, we will show that as $k$ goes to infinity, the sequence of ratio computations in  \eqref{e_iter} will converge to the value in \eqref{ratio_cons}.

The convergence when $\sum_j y_0[j]=0$ can be shown trivially.
So let us now consider the case when $\sum_j y_0[j]>0$, and
define new state variables $\newy_k$ and $\newz_k$  for
$k\geq 0$ as follows:
\begin{eqnarray}
\newy_k[i] & = & \frac{y_k[i]}{\sum_j y_0[j]}, ~ \forall i,\label{e_newy} \\
\newz_k[i] & = & \frac{z_k[i]}{\sum_j z_0[j]}, ~ \forall i. \label{e_newz}
\end{eqnarray}
Thus, $\newy_0$ and $\newz_0$ are defined by normalizing $y_k$ and $z_k$.
It follows that $\newy_0$ and $\newz_0$ are stochastic row vectors.
Also, since our transition matrices are row stochastic, it
follows that $\newy_k$ and $\newz_k$ are also stochastic vectors for all $k\geq 0$.

We assume that each node knows a lower bound on
$\sum_j z_0[j]$, denoted by $\mu_z$.
In typical scenarios, for all $i\in\scriptv$, $z_0[i]$ will be positive,
and, node $i\in\scriptv$ can use $z_0[i]$ as a non-zero lower bound on
$\sum_j z_0[j]$ (thus, in general, the lower bound used by different
nodes may not be identical). We also assume an upper bound, say $\mu_y$, on $\sum_j y_0[j]$.

Let us define 
\begin{eqnarray} \mu = \frac{\mu_z ~ c^l} {n}. \label{e_mu_def}
\end{eqnarray}
As time progresses, each node $i\in\scriptv$ will calculate
a new estimate of the consensus value whenever $z_k[i]\geq \mu$. The next lemma establishes that nodes will can carry out this calculation infinitely often.

\begin{lemma}
\label{l_scriptt}
Let  $\scriptt_i=\{\tau_i^1,\tau_i^2,\cdots\}$ denote  the sequence of
time instances when node $i$ updates its estimate of the consensus using \eqref{e_iter}, and obeying \eqref{e_mu},
where $\tau_i^j<\tau_i^{j+1}$, $j\geq 1$.
The sequence $\scriptt_i$ contains infinitely many elements with probability 1.
\end{lemma}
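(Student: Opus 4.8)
The plan is to combine the mass-conservation invariant of the $z$-iteration with property (M5), and then turn a ``good block'' event into an application of the second (independent) Borel--Cantelli lemma. The first thing I would record is the invariant: since each $M_k$ is row stochastic and $z_k = z_{k-1}M_k$, total mass is conserved, so $\sum_j z_k[j] = \sum_j z_0[j] \geq \mu_z$ for every $k \geq 0$ (this is immediate from $\sum_i M_k[j,i]=1$). This guarantees that a fixed, strictly positive amount of mass is always present in the system, which is what will eventually be concentrated onto node $i$.

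Next I would invoke property (M5): there is a single matrix $T^*$, equal to a product of $l$ matrices from $\scriptm$, whose columns indexed by every $i\in\scriptv$ are strictly positive. Chaining this with (M3), I would argue that because $T^*$ is a product of $l$ matrices each of whose positive entries is at least $c$, any positive entry of $T^*$ is bounded below by $c^l$ (there is at least one length-$l$ path of positive factors realizing it). Hence $T^*[j,i] \geq c^l$ for all rows $j$ and all $i\in\scriptv$ simultaneously.

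Now partition time into the consecutive blocks used to define $W_k$ in \eqref{e_W}, write $W_r$ for the $r$-th block product, and consider the event $G_r = \{W_r = T^*\}$. By (M4) the matrices are selected i.i.d.\ and each constituent of $T^*$ has positive probability, so $q^* := \Pr\{G_r\}>0$, and the $G_r$ are mutually independent since the blocks are disjoint. On $G_r$, using $z_{rl}=z_{(r-1)l}W_r$, the mass invariant, and the entrywise bound,
\[ z_{rl}[i] = \sum_j z_{(r-1)l}[j]\, T^*[j,i] \geq c^l \sum_j z_{(r-1)l}[j] = c^l \sum_j z_0[j] \geq c^l \mu_z \geq \mu, \]
where the last step uses $\mu = \mu_z c^l/n \leq \mu_z c^l$ from \eqref{e_mu_def}. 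Thus each $G_r$ forces the threshold condition \eqref{e_mu} to hold at time $rl$, and for all $i\in\scriptv$ at once.

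Finally, since $\sum_r \Pr\{G_r\} = \sum_r q^* = \infty$ and the $G_r$ are independent, the second Borel--Cantelli lemma yields $\Pr\{G_r \text{ occurs infinitely often}\}=1$; on that almost-sure event $z_{rl}[i]\geq\mu$ for infinitely many $r$, so each $\scriptt_i$ is infinite with probability $1$. The main obstacle I anticipate is the second step: pinning down the uniform lower bound $c^l$ on the relevant column of the block product by correctly combining (M3) and (M5), and confirming that it is the conserved total mass $\sum_j z_0[j]$ (rather than any per-entry quantity) that multiplies through. Once that entrywise estimate is clean, the probabilistic argument is routine, the only care being to use the \emph{independent} form of Borel--Cantelli across disjoint blocks, rather than the first form used in Lemma~\ref{l_as}.
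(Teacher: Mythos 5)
Your proof is correct and follows essentially the same route as the paper's: block the time axis into the $W_r$ products of \eqref{e_W}, use (M4)--(M5) to get an independent positive-probability ``good block'' event, use (M3) to lower-bound the relevant column of the block product by $c^l$, and conclude via the second Borel--Cantelli lemma. The only (harmless) differences are that you condition on the stronger event $W_r=T^*$ rather than merely ``$W_r$ has a positive $i$-th column,'' and you feed the conserved total mass $\sum_j z_0[j]\geq\mu_z$ through the whole column instead of the paper's pigeonhole step on the stochastic vector $\newz_{(r-1)l}$ (at least one entry $\geq 1/n$), which actually yields the slightly sharper bound $c^l\mu_z$ in place of $c^l\mu_z/n$; both exceed $\mu$ as defined in \eqref{e_mu_def}.
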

\begin{proof}
To prove the lemma, it will suffice to prove that for infinitely many values of
$k$, $z_k[i]>\mu$, with probability 1. Assumptions (M1)-(M5) imply that each
matrix $W_j,~j\geq 1$ (defined in (\ref{e_W}))
contains a strictly positive column corresponding to index $i\in\scriptv$
with a non-zero probability, say $\gamma_i>0$. Also,
the choice of $W_{k_1}$ and $W_{k_2}$
is independent of each other for $k_1\neq k_2$.
Therefore, the second Borel-Cantelli lemma implies that,
with probability 1,
for infinitely many values of $j$, $W_j$ will have the
$i$-th column strictly positive.
Since the non-zero elements of each matrix
in $\scriptm$ are all greater or equal to $c$, $c>0$ (by property M3),
and since $W_j$ is a product of $l$ matrices in $\scriptm$,
it follows that all the non-zero elements of each $W_j$
must be lower bounded by $c^l$.

Consider only those $j\geq 1$ for which $W_j$ contains positive $i$-th
column.
As noted above, there are infinitely many such $j$ values. 
Now,
\[
\newz_{jl} ~ = ~ \newz_{(j-1)l}~W_j.
\]
As noted above, $\newz_k$ is a stochastic vector.
Thus, for any $k\geq 0$,
\begin{eqnarray}
\sum_i \newz_{k}[i] & = & 1 
\end{eqnarray}
and, at least one of the elements of
$\newz_{(j-1)l}$ must be greater or equal to $1/n$.
Also, all the elements in columns of $W_j$ indexed by
$i\in\scriptv$ are lower bounded by $c^l$ (recall that we are
now only considering those $j$ for which the $i$-th column of $W_j$ is positive).
This implies that,
\begin{eqnarray}
\newz_{jl}[i] &  \geq &  c^l/n \\
\Rightarrow ~~~~~
z_{jl}[i] &  \geq & \left(\sum_j z_0[j]\right) ~ c^l/n \\ 
\Rightarrow ~~~~~ z_{jl}[i] &  \geq & \mu_z~c^l/n \\
\Rightarrow ~~~~~ z_{jl}[i] &  \geq & \mu, ~~ \forall i\in\scriptv ~~~~~~~~~~\mbox{(by (\ref{e_mu_def}))}
\label{e_event_a}
\end{eqnarray}
Since infinitely many $W_j$'s will contain a positive $i$-th column (with probability 1),
(\ref{e_event_a}) holds for infinitely many $j$ with probability 1.
Therefore, with probability 1,
the set $\scriptt_i=\{\tau_i^1,\tau_i^2,\cdots\}$ contains infinitely many elements,
for all $i\in\scriptv$.
\end{proof}

Finally, the next theorem shows that the ratio consensus algorithm will
converge to the consensus value defined in \eqref{ratio_cons}.

\begin{theorem}
\label{t_1}
Let $\pi_i[t]$ denote node $i$'s estimate of the consensus value calculated at time $\tau_i^t$.
For each node $i\in\scriptv$, with probability 1, the estimate
$\pi_i[t]$ converges to 
\[
\pi^* = \frac{\sum_jy_j[0]}{\sum_j z_j[0]}.
\]
\end{theorem}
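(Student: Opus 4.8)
The plan is to exploit the fact, already established, that the rows of the forward product $T_k$ asymptotically equalize (Lemma~\ref{l_as}), and to show that this row-equalization forces the \emph{ratio} $y_k[i]/z_k[i]$ to converge to $\pi^*$, with the (possibly drifting) common column value cancelling out of numerator and denominator. First I would dispose of the trivial case $\sum_j y_0[j]=0$: since $y_0[a]\geq 0$ for all $a$, this forces $y_0\equiv 0$, hence $y_k\equiv 0$ and $\pi_k[i]=0=\pi^*$ at every valid time. So assume $\sum_j y_0[j]>0$ and write $Y_0=\sum_{a}y_0[a]$, $Z_0=\sum_{a}z_0[a]$ (the sums effectively run over $a\in\scriptv$, since the virtual buffers carry zero initial mass). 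From $y_k=y_0T_k$ and $z_k=z_0T_k$ one has the exact expressions
\begin{align}
y_k[i]=\sum_{a}y_0[a]\,T_k[a,i], \qquad z_k[i]=\sum_{a}z_0[a]\,T_k[a,i]. \nonumber
\end{align}

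For each column $i\in\scriptv$, set $c_k[i]=\min_a T_k[a,i]$, so that $0\leq T_k[a,i]-c_k[i]\leq \delta(T_k)$ for every row $a$, directly from definition \eqref{e_delta}. Substituting $T_k[a,i]=c_k[i]+(T_k[a,i]-c_k[i])$ and using nonnegativity of $y_0$ and $z_0$ yields
\begin{align}
y_k[i]=c_k[i]\,Y_0+e_y,\ \ |e_y|\leq \delta(T_k)\,Y_0, \qquad z_k[i]=c_k[i]\,Z_0+e_z,\ \ |e_z|\leq \delta(T_k)\,Z_0. \nonumber
\end{align}
The crucial point is that the common column value $c_k[i]$ enters as a factor of $Y_0$ and of $Z_0$ respectively, so it cancels upon forming the ratio. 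A direct computation gives
\begin{align}
\pi_k[i]-\pi^*=\frac{y_k[i]}{z_k[i]}-\frac{Y_0}{Z_0}=\frac{e_y\,Z_0-Y_0\,e_z}{Z_0\, z_k[i]}, \nonumber
\end{align}
whose numerator is bounded in absolute value by $2\,\delta(T_k)\,Y_0 Z_0$.

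The next step is to control the denominator, which is exactly where the threshold condition enters. I would evaluate the above only along the update times $k=\tau_i^t$, at which $z_k[i]\geq \mu$ holds by \eqref{e_mu}; combining this with the upper bound $Y_0\leq \mu_y$ produces the clean estimate
\begin{align}
|\pi_i[t]-\pi^*|\leq \frac{2\,\delta(T_{\tau_i^t})\,Y_0}{z_{\tau_i^t}[i]}\leq \frac{2\,\mu_y}{\mu}\,\delta(T_{\tau_i^t}), \nonumber
\end{align}
in which the prefactor $2\mu_y/\mu$ is a finite constant independent of $t$. Finally I would splice together the two almost-sure statements in hand: by Lemma~\ref{l_scriptt}, with probability $1$ the set $\scriptt_i$ is infinite, so $\tau_i^t\to\infty$; and by Lemma~\ref{l_as}, $\delta(T_k)\to 0$ almost surely. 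On the intersection of these two probability-$1$ events, $\delta(T_{\tau_i^t})\to 0$, and the displayed bound forces $\pi_i[t]\to\pi^*$, establishing the theorem.

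I expect the principal obstacle to be conceptual rather than computational: one must make precise that, although $T_k$ itself need \emph{not} converge to a fixed matrix (only its rows equalize, and the common column values $c_k[i]$ may keep drifting with $k$), the computed quantity is nonetheless insensitive to this drift because $c_k[i]$ cancels identically in the ratio. The threshold $\mu$ is what keeps the denominator $z_{\tau_i^t}[i]$ safely bounded away from $0$, and its interplay with the \emph{random} update times $\tau_i^t$—reconciled by intersecting the probability-$1$ events of Lemmas~\ref{l_scriptt} and~\ref{l_as}—is the only genuinely delicate point.
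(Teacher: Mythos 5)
Your proposal is correct and follows essentially the same route as the paper's own proof: both rest on Lemma~\ref{l_as} ($\delta(T_k)\to 0$ a.s.), Lemma~\ref{l_scriptt} (infinitely many update times), and the threshold $z_k[i]\geq\mu$ to control the denominator, yielding a bound of the form $\mathrm{const}\cdot\delta(T_k)\,\mu_y/\mu$ on the error. Your explicit decomposition $T_k[a,i]=c_k[i]+(T_k[a,i]-c_k[i])$ is just a more computational rendering of the paper's observation that the normalized vectors $\newy_k$ and $\newz_k$ lie in the convex hull of the rows of $T_k$ and hence differ entrywise by at most $\delta(T_k)$.
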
 
\begin{proof}
Note that the transition matrices $M_k,~k\geq 1$, are randomly drawn
from a certain distribution.
By an ``execution'' of the algorithm, we will mean a particular instance
of the $M_k$ sequence. Thus, the distribution on $M_k$'s results in
a distribution on the executions.
Lemma~\ref{l_as} implies that,
\[
\Pr \left\{\lim_{k\rightarrow\infty}\, \delta(T_k)=0\right\} = 1.
\]
Together, Lemmas~\ref{l_as} and \ref{l_scriptt} imply that, with
probability 1, for a chosen execution, (i) for any $\psi>0$,
there exists a finite $k_{\psi}$ such that, for all $k\geq k_{\psi}$, 
$\delta(T_k)<\psi$, and (ii) there exist infinitely many
values of $k\geq k_{\psi}$ such that $z_k[i]\geq\mu$ (i.e., $k\in\scriptt_i$ for
the chosen execution). 

Consider any $k\geq k_{\psi}$ such that $z_k[i]>\mu$.
Since $\delta(T_k)\leq \psi$,
the rows of matrix $T_k$ are ``within $\psi$'' of each other.
Observe that $\newy_k$ is obtained as the product
of stochastic row vector $\newy_0$ and
$T_k$. Thus, $\newy_k$ is in the convex hull of the
rows of $T_k$. Similarly $\newz_k$ is in the convex hull of the rows
of $T_k$. Therefore, the $j$-th elements of $\newy_k$ and $\newz_k$
are within $\psi$ of each other, for all $j$. Therefore,
\begin{eqnarray}
\left|~\newy_k[i] - \newz_k[i]~\right| & \leq & \psi \label{e_y_z}
\label{e_event_b} \\
\Rightarrow
\left|~\frac{\newy_k[i]}{\newz_k[i]}-1~\right| & \leq & \frac{\psi}{\newz_k[i]} \\
\Rightarrow ~~~~~
\left|~\frac{y_k[i]}{z_k[i]}-  
\frac{\sum_j y_0[j]}{\sum_j z_0[j]} ~\right| & \leq &
\frac{\psi ~ \sum_j y_0[j]}{z_k[i]} \mbox{~~~~(by
		(\ref{e_newy}) and (\ref{e_newz}))} \\
\Rightarrow ~~~~~ \left|~\frac{y_k[i]}{z_k[i]}-  
\frac{\sum_j y_0[j]}{\sum_j z_0[j]} ~\right|
& \leq &
\frac{\psi ~ \mu_y}{z_k[i]}
\mbox{~~~~~(because $\sum_j y_0[j]\leq \mu_y$)}
\label{e_bound_2} \\
\Rightarrow ~~~~~ \left|~\frac{y_k[i]}{z_k[i]}-  
\frac{\sum_j y_0[j]}{\sum_j z_0[j]} ~\right|
& \leq &
\frac{\psi ~ \mu_y}{\mu}.
\label{e_epsilon_new}
\end{eqnarray}
Now, given any $\epsilon>0$, let us choose
$\psi = \epsilon\mu/\mu_y$. Then
(\ref{e_epsilon_new}) implies that
\[
\left|~\frac{y_k[i]}{z_k[i]}-  
\frac{\sum_j y_0[j]}{\sum_j z_0[j]} ~\right|
 \leq
\epsilon
\]
whenever $k\geq k_{\psi}$ and $k\in\scriptt_i$.
Therefore, in the limit, 
$\frac{y_k[i]}{z_k[i]}$ for $k \in \mathcal{T}_i$
converges to $\frac{\sum_j y_0[j]}{\sum_j z_0[j]}$.
This result holds with probability 1, since conditions (i) and (ii)
stated above hold with probability 1.
\end{proof}

The result above can be strengthened by 
proving convergence of the algorithm even if each node $i\in\scriptv$
updates its estimate whenever $z_k[i]>0$ (not necessarily $\geq \mu$).
To prove the convergence in this case, the argument is similar to that
in Theorem~\ref{t_1}, with two modifications:
\begin{itemize}
\item Lemma~\ref{l_scriptt} needs to be strengthened by observing that there
exist infinitely many time instants at which $z_k[i]>\mu$ simultaneously for all $i\in\scriptv$.
This is true due to the existence of a matrix $T^*$ (as seen in the discussion of (M5))
that contains positive columns corresponding to all $i\in\scriptv$.
\item Using the above observation, and the argument in the proof
of Theorem~\ref{t_1}, it then follows that, with probability 1,
for any $\psi$, there exists a finite $k_{\psi}$ such that
$\delta(T_k)<\psi$ whenever $k\geq k_{\psi}$. As before,
defining $\psi = \epsilon\mu/\mu_y$, it can be shown that
for any $\epsilon$, {\em there exists}\, a $k_\epsilon\geq k_{\psi}$ such
that the following inequality holds for all $i\in\scriptv$ {\em simultaneously}.
\begin{eqnarray}
\left|~\frac{y_{k_\epsilon}[i]}{z_{k_\epsilon}[i]}-  
\frac{\sum_j y_0[j]}{\sum_j z_0[j]}~\right| 
 & \leq &
\epsilon ~~~~ \forall i\in\scriptv \\
\Rightarrow ~~~~
\frac{\sum_j y_0[j]}{\sum_j z_0[j]} - \epsilon
~~\leq~~ 
\frac{y_{k_\epsilon}[i]}{z_{k_\epsilon}[i]}
& \leq &
\frac{\sum_j y_0[j]}{\sum_j z_0[j]} + \epsilon
~~~~ \forall i\in\scriptv \label{e_eps}
\end{eqnarray}
\end{itemize}
Naturally, $z_{k_\epsilon}[i]\neq 0$, $\forall i\in\scriptv$.
It is now easy to argue that the above inequality will continue to hold
for all $k>k_\epsilon$ and each $i\in\scriptv$ whenever $z_i[k]>0$.
To see this, observe that, for $k>k_\epsilon$,
\[
y_k = y_{k_\epsilon}~ \Pi_{k_\epsilon+1}^k~ M_k
\] 
Define $P=\Pi_{k_\epsilon+1}^k~ M_k$ and 
$\psi = \epsilon\mu/\mu_y$.
Then, we have that, whenever $z_k[i]>0$ for $k>k_\epsilon$,
\begin{eqnarray}
\frac{y_k[i]}{z_k[i]} & = &
\frac{ \sum_{j=1}^n y_{k_\epsilon}[j]~P[j,i] }
 { \sum_{j=1}^n z_{k_\epsilon}[j]~P[j,i] } \\
& = & 
\frac{ \sum_{j=1, P[j,i]\neq 0}^n y_{k_\epsilon}[j]~P[j,i] }
 { \sum_{j=1, P[j,i]\neq 0}^n z_{k_\epsilon}[j]~P[j,i] }
~~~~~\mbox{(summation over non-zero $P[j,i]$ terms)} \nonumber \\
&\Rightarrow&
\min_{j,P[j,i]>0}~\frac{y_{k_\epsilon[j]}}{z_{k_\epsilon[j]}}
~~\leq~~ \frac{y_k[i]}{z_k[i]}
 \leq  
\max_{j,P[j,i]>0}~\frac{y_{k_\epsilon[j]}}{z_{k_\epsilon[j]}}
\label{e_ratio_ineq}
\\
& \Rightarrow &
\frac{\sum_j y_0[j]}{\sum_j z_0[j]} - \epsilon
~~\leq~~ \frac{y_k[i]}{z_k[i]}
 \leq  
\frac{\sum_j y_0[j]}{\sum_j z_0[j]} + \epsilon ~~~~ \mbox{from (\ref{e_eps})}
\\
& \Rightarrow &
\left|~\frac{y_{k}[i]}{z_{k}[i]}-  
\frac{\sum_j y_0[j]}{\sum_j z_0[j]}~\right| 
 \leq  \epsilon ~~~~\mbox{for all $i\in\scriptv$ and $k\geq k_\epsilon$}
\end{eqnarray}
This proves the convergence of the algorithm in the limit.
Recall that for this convergence it suffices if each node updates
its estimate of the consensus whenever its $z$ value is positive.
(\ref{e_ratio_ineq}) follows from the observation that
$\frac{\sum_j a[j]u[j]}{\sum_jb[j]u[j]}~=~\sum_j
\left[\frac{a[j]}{b[j]}
 ~ \frac{b[j]u[j]}{\sum_kb[k]u[k]}\right]$ is a weighted average
of $\frac{a[j]}{b[j]}$, and therefore, lower bounded by $\min_j \frac{a[j]}{b[j]}$ and upper bounded
by $\max_j \frac{a[j]}{b[j]}$.

\comment{+++++++++++++++++++ OLD PROOF with BOREL-CANTELLI +++++++++++++++

\begin{theorem}
Let $\pi_i[t]$ denote node $i$'s estimate of the consensus value calculated at time $\tau_i^t$.
For each node $i\in\scriptv$, the estimate
$\pi_i[t]$ {\em almost surely} converges to 
\[
\pi^* = \frac{\sum_jy_j[0]}{\sum_j z_j[0]}.
\]
\end{theorem} 
\begin{proof}
Lemma~\ref{l_scriptt} proves that the sequence $\pi_i[t]$ contains infinitely
many elements with probability 1. We now consider probability of two events at node $i$, for $k\geq 8l/w$.
\begin{itemize}

\item {\em (Event $A_k$)} $z_k[i]\geq \mu$:
By the discussion in the proof of Lemma~\ref{l_scriptt},
with probability $\geq \gamma>0$, $z_k[i]\geq\mu$ for $k\geq 8l/w$.

\item {\em (Event $B_k$)} $\newy_k[i]-\newz_k[i]\leq \beta^k$:
By Lemma~\ref{l_t}, if $k\geq 8l/ w$, $\delta(T_k)\leq \beta^k$,
that is, rows of matrix $T_k$
are ``within $\beta^k$'' of each other, with probability greater than $ 1-\alpha^{-k}$.
Observe that $\newy_k$ is obtained as the product
of stochastic row vector $\newy_0$ and
$T_k$. Thus, $\newy_k$ is in the convex hull of the
rows of $T_k$. Similarly $\newz_k$ is in the convex hull of the rows
of $T_k$. Therefore, the $i$-th elements of $\newy_k$ and $\newz_k$
are within $\beta^k$ of each other, for all $i$. That is, for all
$i$,
\begin{eqnarray}
\left|~\newy_k[i] - \newz_k[i]~\right| \leq \beta^k \label{e_y_z}
\label{e_event_b}
\end{eqnarray}
with probability greater than $ 1-\alpha^k$.
\end{itemize}

Now consider any $k\geq 8l/w$ for which events $A_k$ and $B_k$ both hold.
\begin{itemize}
\item Event $A_k\cap B_k$: In  this case (\ref{e_event_a}) and (\ref{e_event_b})
both apply.
Thus,
\begin{eqnarray}
\left|~\frac{\newy_k[i]}{\newz_k[i]}-1~\right| & \leq & \frac{\beta^k}{\newz_k[i]} \\
\Rightarrow ~~~~~
\left|~\frac{y_k[i]}{z_k[i]}-  
\frac{\sum_j y_0[j]}{\sum_j z_0[j]} ~\right| & \leq &
\frac{\beta^k ~ \sum_j y_0[j]}{z_k[i]} \mbox{~~~~by
		(\ref{e_newy}) and (\ref{e_newz})} \\
\Rightarrow ~~~~~ \left|~\frac{y_k[i]}{z_k[i]}-  
\frac{\sum_j y_0[j]}{\sum_j z_0[j]} ~\right|
& \leq &
\frac{\beta^k ~ \mu_y}{z_k[i]}
\mbox{~~~~~because $\sum_j y_0[j]\leq \mu_y$}
\label{e_bound_2} \\
\Rightarrow ~~~~~ \left|~\frac{y_k[i]}{z_k[i]}-  
\frac{\sum_j y_0[j]}{\sum_j z_0[j]} ~\right|
& \leq &
\frac{\beta^k ~ \mu_y}{\mu}
\label{e_epsilon_new}
\end{eqnarray}
Now, given any $\epsilon>0$, if we choose
$k\geq N(\epsilon) = \max(8l/ w , \log(\epsilon\mu/\mu_y)/\log\beta)$, then
(\ref{e_epsilon_new}) implies that
\[
\left|~\frac{y_k[i]}{z_k[i]}-  
\frac{\sum_j y_0[j]}{\sum_j z_0[j]} ~\right|
 \leq
\epsilon
\]

\end{itemize}

To show that $\pi_i[t]$ converges almost surely to $\pi^*$, it suffices
to show that
\[
\forall~\epsilon>0,~~~~\Pr(\left|\pi_i[t]-\pi^*\right|>\epsilon \mbox{~infinitely often}) = 0
\]
We will prove the above equality using the Borel-Cantelli lemma.
\begin{eqnarray}
\sum_t ~ \Pr(|\pi_i[t]-\pi^*|>\epsilon) & = &  
\sum_{\tau_i^t\in \scriptt_i} ~ \Pr\left( \left|~\frac{y_{\tau_i^t}[i]}{z_{\tau_i^t}[i]}-  
\frac{\sum_j y_0[j]}{\sum_j z_0[j]} ~\right|
>\epsilon ~|~ A_{\tau_i^t}\right)  \\
& = & \sum_{N(\epsilon)\geq\tau_i^t\in\scriptt_i} ~ \Pr\left( \left|~\frac{y_{\tau_i^t}[i]}{z_{\tau_i^t}[i]}-  
\frac{\sum_j y_0[j]}{\sum_j z_0[j]} ~\right|
>\epsilon ~|~ A_{\tau_i^t}\right) \\&& 
+
\sum_{N(\epsilon)<\tau_i^t\in\scriptt_i} ~ \Pr\left( \left|~\frac{y_{\tau_i^t}[i]}{z_{\tau_i^t}[i]}-  
\frac{\sum_j y_0[j]}{\sum_j z_0[j]} ~\right|
>\epsilon ~|~ A_{\tau_i^t}\right)
\end{eqnarray}
Observe that the first summation on  the right hand side above is finite
and upper bounded by $N(\epsilon)$.
Consider the second summation:
\begin{eqnarray}
\sum_{N(\epsilon)<\tau_i^t\in\scriptt_i} ~ \Pr\left( \left|~\frac{y_{\tau_i^t}[i]}{z_{\tau_i^t}[i]}-  
\frac{\sum_j y_0[j]}{\sum_j z_0[j]} ~\right|
>\epsilon ~|~ A_{\tau_i^t}\right) 
& \leq &\sum_t ~ \Pr(\bar{B}_{\tau_i^t}|A_{\tau_i^t}) \\
& \leq &\sum_t ~ \Pr(\bar{B}_{\tau_i^t} \cap A_{\tau_i^t}) / \Pr(A_{\tau_i^t}) \\
& \leq & \sum_t ~\Pr(\bar{B}_{\tau_i^t}) / \gamma \\
& \leq & \sum_t ~\alpha^{\tau_i^t}/\gamma\\
& \leq & \alpha/(\gamma(1-\alpha)) \\
& < & \infty \mbox{~~~~~~because~ $0<\alpha<1$ and $\gamma>0$}
\end{eqnarray}

Thus $\sum_t ~ \Pr(|\pi_i[t]-\pi^*|>\epsilon) < \infty$, and
the Borel-Cantelli lemma implies that
\[
\forall~\epsilon>0,~~~~\Pr(\left|\pi_i[t]-\pi^*\right|>\epsilon \mbox{~infinitely often}) = 0
\]
This completes the proof of the theorem.
\end{proof}
++++++++++++++++++++++++++++++++ end comment ++++++++++++++++++++
}

\section{Concluding Remarks and Future Work} \label{concluding_remarks}

Although our analysis above is motivated by wireless environments
wherein transmissions may not succeed, the analysis is more general.
In particular, it applies to other situations in which properties
(M1)--(M5) are true. Indeed, property (M4) by itself is not
as important as its consequence that a given $W_k$ matrix
has non-zero columns indexed by $i\in\scriptv$.

A particular application of the above analysis is in the case
when messages may be delayed.
As discussed previously, mass is transfered by any node
to its neighbors by means of messages. Since these messages
may be delayed, a message sent on link $(i,j)$ in slot $k$
may be received by node $j$ in a later slot.
Let us denote by $V_k[i]$ the set of
messages received by node $i$ at step $k$.
It is possible for $V_k[i]$ to contain
multiple messages from the same node.
Note that $V_k[i]$ may contain a message
sent by node $i$ to itself as well. Let us define
the iteration for $y_k$ as follows:

\begin{eqnarray}
y_k[i] & = & \sum_{v\in V_k[i]} v.
\label{e_async}
\end{eqnarray}
The iteration for $z_k$ can be defined analogously.
Our robust consensus algorithm essentially implements
the above iteration, allowing for delays in delivery of
mass on any link $(i,j)$ (caused by link failures). However,
in effect, the robust algorithm also ensures FIFO (first-in-first-out)
delivery, as follows. In slot $k$, if node $i$ receives mass sent by
node $j\in \scripti_i$ in slot $s,~s\leq k$, then mass sent by node $j$
in slots strictly smaller than $s$ is either received previously, or will be received
in slot $k$.

The virtual buffer mechanism essentially models asynchronous communication,
wherein the messages between any pair of nodes in the network may require
arbitrary delay, governed by some distribution.
It  is not difficult to see that the iterative algorithm (\ref{e_async})
should be able to achieve consensus correctly even under other
distributions on message delays, with possible correlation between
the delays. In fact, it is also possible to
tolerate non-FIFO (or out-of-order) message delivery provided
that the delay distribution satisfies some reasonable constraints.
Delay of up to $B$ slots on a certain link $(i,j)\in\scripte$ can be modeled
using a single chain of $B$ virtual nodes, with links from node $i$
to every virtual nodes, and link from the last of the $B$
nodes to node $j$---in this setting, depending on the delay incurred
by a packet, appropriate link from node $i$ to one of the virtual node  on
the delay chain (or to   $j$, if delay is 0) is used.

Note that while we made certain assumptions regarding link failures,
the  analysis relies primarily on two implications of these
assumptions, namely (i) the rows of the transition matrix $T_k$ become
close to identical as $k$ increases, and (ii) $z_k[i]$ is bounded
away from 0 for each $i$ infinitely often.
When these implications are true, similar convergence results may hold
in other environments.

\bibliographystyle{IEEEtran}

\bibliography{distributed,distributed2}

\end{document}